\DeclarePairedDelimiter{\floor}{\lfloor}{\rfloor}
\begin{document}
%
\title{
\onehalfspacing
Capacity of Clustered Distributed Storage
}

\author{\IEEEauthorblockN{Jy-yong Sohn, Beongjun Choi, Sung Whan Yoon, and Jaekyun Moon}
\IEEEauthorblockA{School of Electrical Engineering\\
Korea Advanced Institute of Science and Technology\\
Daejeon, 34141, Republic of Korea\\
Email: \{jysohn1108, bbzang10, shyoon8\}@kaist.ac.kr, jmoon@kaist.edu}
}



%


\maketitle

\begin{abstract}
A new system model reflecting the clustered structure of distributed storage is suggested to investigate bandwidth requirements for 
repairing failed storage nodes. Large data centers with multiple racks/disks or local networks of storage devices (e.g. sensor network) are good applications of the suggested clustered model.
In realistic scenarios involving clustered storage structures, repairing storage nodes using intact nodes residing in other clusters is more bandwidth-consuming than restoring nodes based on information from intra-cluster nodes. Therefore, it is important to differentiate between intra-cluster repair bandwidth and cross-cluster repair bandwidth in modeling distributed storage.  
Capacity of the suggested model is obtained as a function of fundamental resources of distributed storage systems, namely, storage capacity, intra-cluster repair bandwidth and cross-cluster repair bandwidth. Based on the capacity expression, feasible sets of required resources which enable reliable storage are analyzed. It is shown that the cross-cluster traffic can be minimized to zero (i.e., intra-cluster local repair becomes possible) by allowing extra resources on storage capacity and intra-cluster repair bandwidth, according to a law specified in a closed-form. Moreover, trade-off between cross-cluster traffic and intra-cluster traffic is observed for sufficiently large storage capacity. 


\end{abstract}


%
\IEEEpeerreviewmaketitle

\section{Introduction}

Many enterprises use cloud storage systems in order to support massive amounts of data storage requests from clients.
In the emerging Internet-of-Thing (IoT) era, the number of devices which generate data and connect to the network increases exponentially, so that efficient management of data center becomes a formidable challenge.
However, since a cloud storage system consists of inexpensive commodity disks, failure events occur frequently, degrading system reliability \cite{ghemawat2003google}.

In order to ensure reliability of cloud storage, distributed storage systems (DSSs) with erasure coding have been considered to improve tolerance against storage node failures \cite{bhagwan2004total,dabek2004designing,rhea2003pond}. 
In such systems, the original file is encoded and distributed into multiple storage nodes. When a node fails, a newcomer node regenerates the failed node by contacting a number of survived nodes. This causes traffic burden across the network taking up significant repair bandwidth.
The pioneering work of \cite{dimakis2010network} on distributed storage found the system capacity $\mathcal{C}$, the maximum amount of reliably storable data given the storage size of each node as well as repair bandwidth. A fundamental trade-off between storage node size and repair bandwidth was obtained, which ensures a reliable storage system with the maximum-distance-separable (MDS) property (i.e., any $k$ out of $n$ storage nodes can be accessed to recover the original file). The authors related the failure-repair process of a DSS with the multi-casting problem in network information theory, and exploited the fact that cut-set bound is achievable by network coding \cite{ahlswede2000network}. 

This fundamental result is based on the assumption of a homogeneous system, i.e., each node has the same storage size and repair bandwidth. 
However, storage nodes are actually dispersed into multiple clusters (called disks or racks) in data centers \cite{ford2010availability,huang2012erasure,muralidhar2014f4}, allowing high reliability against both node and rack failures. 
In this clustered system, repairing a failed node gives rise to both intra-cluster and cross-cluster repair traffic.
Since the available cross-rack communication bandwidth is typically $5-20$ times lower than the  intra-rack bandwidth in practical systems \cite{ahmad2014shufflewatcher}, a new system model which reflects this imbalance is required.

\hspace*{6pt} \textit{Main Contributions:} This paper suggests a new system model for \textit{clustered DSS} to reflect the clustered nature of real distributed storage systems wherein an imbalance exists between intra and cross-cluster repair burdens.
This model can be applied to not only large data centers, but also local networks of storage devices 
such as the sensor networks or home clouds which are expected to be prevalent in the IoT era.
This model is also more general in the sense that when the intra- and cross-cluster repair bandwidths are set equal, the resulting structure reduces to the original DSS model of \cite{dimakis2010network}. 

Storage capacity of the clustered DSS is obtained as a function of node storage size, intra-cluster repair bandwidth and cross-cluster repair bandwidth. 
The existence of the cluster structure manifested as the imbalance between intra/cross-cluster traffics makes the capacity analysis challenging;
Dimakis' proof in \cite{dimakis2010network} cannot be directly extended to handle the problem at hand.
We show that symmetric repair (obtaining the same amount of information from each helper node) is optimal in the sense of maximizing capacity given the storage node size and total repair bandwidth, as also shown in \cite{ernvall2013capacity} for the case of varying repair bandwidth across the nodes. 
However, we stress that in most practical scenarios, the need is greater for reducing cross-cluster communication burden, and we show that this is possible by trading with reduced overall storage capacity and/or increasing intra-repair bandwidth.

The behavior of the clustered DSS is also observed based on the derived capacity expression. 
First, the condition for \textit{zero} cross-cluster repair bandwidth is obtained, which enables local repair via intra-cluster communication only.
In order to support local repair while preserving the MDS property, extra amounts of resources are required, according to a 
precise mathematical rule derived in a closed-form.
Secondly, trade-off between cross-cluster repair bandwidth and intra-cluster repair bandwidth is discussed. The repair bandwidth resource is required to ensure reliability of a file against successive failure events; the amounts of intra-/cross-cluster repair bandwidth can be chosen in the trade-off curve, depending on the system constraint.

\hspace*{6pt} \textit{Related works:}
In order to reflect the non-homogeneous structure of storage nodes, several researchers in the literature aimed to analyze practical distributed storage systems. A heterogeneous model was considered in \cite{ernvall2013capacity}, where storage size of each node and repair bandwidth for each newcomer node is generally non-uniform. Upper/lower capacity bounds for the heterogeneous DSS are obtained in \cite{ernvall2013capacity}. 
A asymmetric repair process is considered in \cite{akhlaghi2010fundamental}, coining cheap and expensive nodes, depending on the amount of data that can be transfered to any newcomer. This is different from our analysis where we adopt a notion of cluster and introduce imbalance between intra- and /cross-cluster repair burdens.

In \cite{gaston2013realistic}, the idea of \cite{akhlaghi2010fundamental} is developed to a two-rack system, by setting the communication burden within a rack much lower than the burden across different racks, similar to our analysis. However, they classified the number of helper nodes by each rack. 
In other words, any failed node is repaired by $d_1$ helper nodes in $1^{st}$ rack and $d_2$ helper nodes in $2^{nd}$ rack, irrespective of the location of the failed node. On the other hand, the present paper classifies the number of helper nodes by their locations relative to the failed node. A failed node is repaired by $d_I$ helper nodes within the same rack and $d_c$ helper nodes in other racks. Compared to \cite{gaston2013realistic}, the setting in our work allows insights into tradeoffs between the intra- and cross-rack repair bandwidth.
Moreover, \cite{gaston2013realistic} focused on a repair cost function based on a weighted sum of two types of repair bandwidths.
whereas this paper investigates the trade-off between the two different repair bandwidth types as well as the storage node size.
Finally, some other works focused on the code design applicable to multi-rack DSSs \cite{tebbi2014code, 7541298}.

\hspace*{10pt} \textit{Organization:} This paper is organized as follows. Section II describes basic preliminary materials about distributed storage systems and the information flow graph, an efficient tool for analyzing DSS. A new system model for the clustered DSS is suggested in Section III, while the capacity of the suggested model is obtained in Section IV. Based on the derived capacity expression, the key nature of the clustered DSS is unveiled in Section V. Finally, Section VI draws the conclusion.

\section{Background}

\subsection{Distributed Storage System}
Distributed storage systems have been considered as a candidate for storing data, which maintains reliability by means of erasure coding \cite{dimakis2011survey}. The original data file is spread into $n$ unreliable nodes, each with storage size $\alpha$. When a node fails, it is regenerated by contacting $d < n$ helper nodes and obtaining a particular amount of data, $\beta$, from each helper node. The amount of communication burden allocated for one failure event is called the repair bandwidth, denoted as $\gamma = d \beta$.
When the client requests a retrieval of the original file, assuming all failed nodes have been repaired, access to any $k<n$ out of $n$ nodes must guarantee a file recovery. The ability to recover the original data using any any $k<n$ out of $n$ nodes is based on the MDS property.   
Distributed storage systems can be used in many applications such as large data centers, peer-to-peer storage systems and wireless sensor networks \cite{dimakis2010network}. 

\subsection{Information Flow Graph}
Information flow graph is a useful tool to analyze the amount of information flow from source to data collector in a DSS. 
It is a directed graph consisting of three types of nodes: data source $\mathrm{S}$, data collector $\mathrm{DC}$, and storage nodes $x_{in}^i, x_{out}^i$ as shown in Fig. \ref{Fig:information_flow_graph}.
Storage node $x^{i}$ can be viewed as consisting of 'input-node' $x_{in}^i$ and 'output node' $x_{out}^i$, which are responsible for the incoming and outgoing edges, respectively. $x_{in}^i$ and $x_{out}^i$ are connected by a directed edge with capacity identical to the storage size $\alpha$ of node $x^i$. 

Data from the source is stored into $n$ nodes. This process is represented by $n$ edges going from $\mathrm{S}$ to $\{x^i\}_{i=1}^n$, where each edge capacity is set to infinity.
A failure/repair process in a DSS can be described as follows.
When a node $x^{j}$ fails, a new node $x^{n+1}$ joins the graph by connecting edges from $d$ survived nodes, where each edge has capacity $\beta$.
After all repairs are done, data collector $\mathrm{DC}$ chooses arbitrary $k$ nodes to retrieve data, as illustrated by the edges connected 
from $k$ survived nodes with infinite edge capacity. 
Fig. \ref{Fig:information_flow_graph} gives an example of information flow graph representing a distributed storage system with $n=4, k=3, d=3$.

For a given flow graph $G$, a cut between $\mathrm{S}$ and $\mathrm{DC}$ is defined as a subset $C$ of edges which satisfies the following: every directed path from $\mathrm{S}$ to $\mathrm{DC}$ includes at least one edge in $C$. The cut between $\mathrm{S}$ and $\mathrm{DC}$ having the smallest total sum of edge capacities is called minimum cut.
A min-cut value is defined as the sum of edge capacities included in the minimum cut.

  \begin{figure}[!t]
	\centering
	\includegraphics[height=35mm]{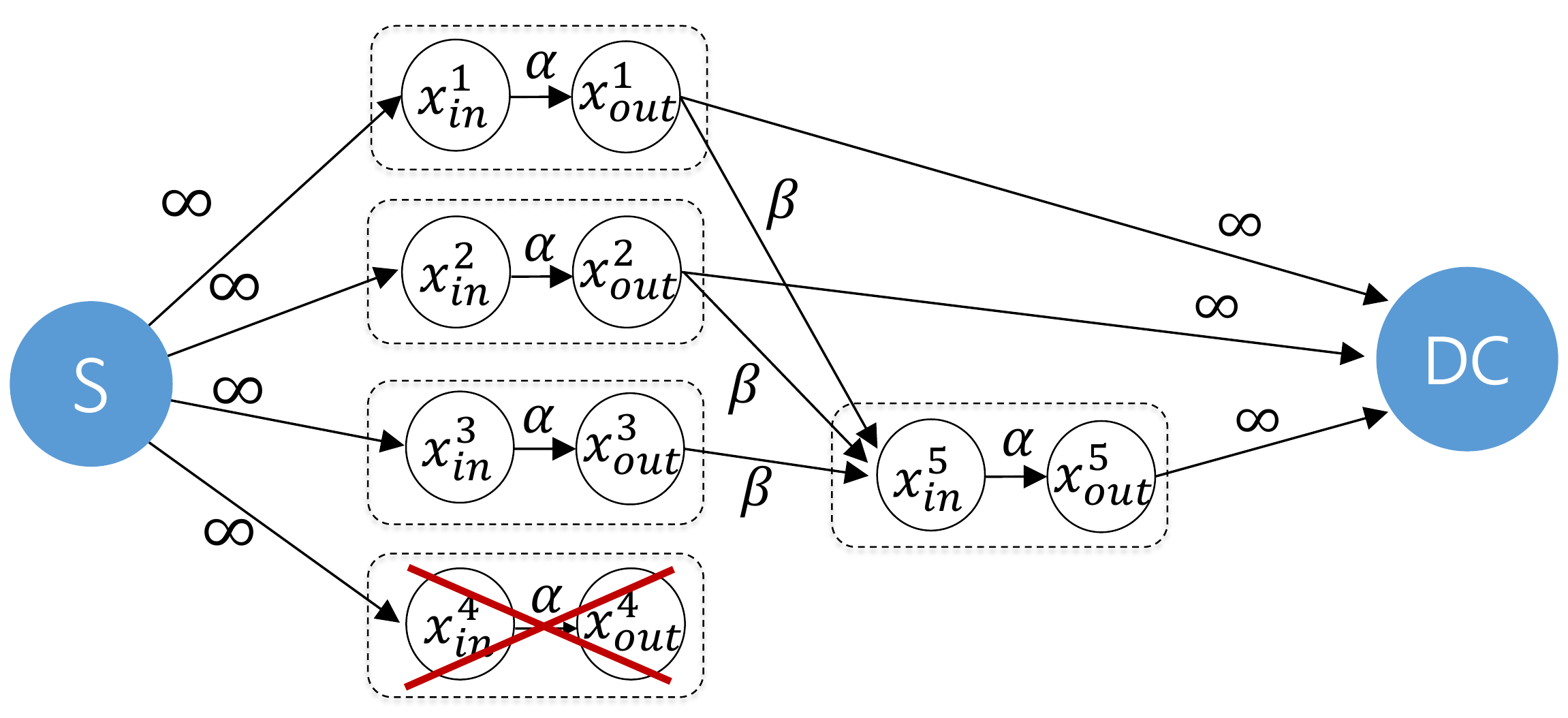}
	\caption{Information flow graph ($n=4, k=3, d=3$)}
	\label{Fig:information_flow_graph}
\end{figure}

%

  \begin{figure}[!t]
	\centering
	\includegraphics[height=25mm]{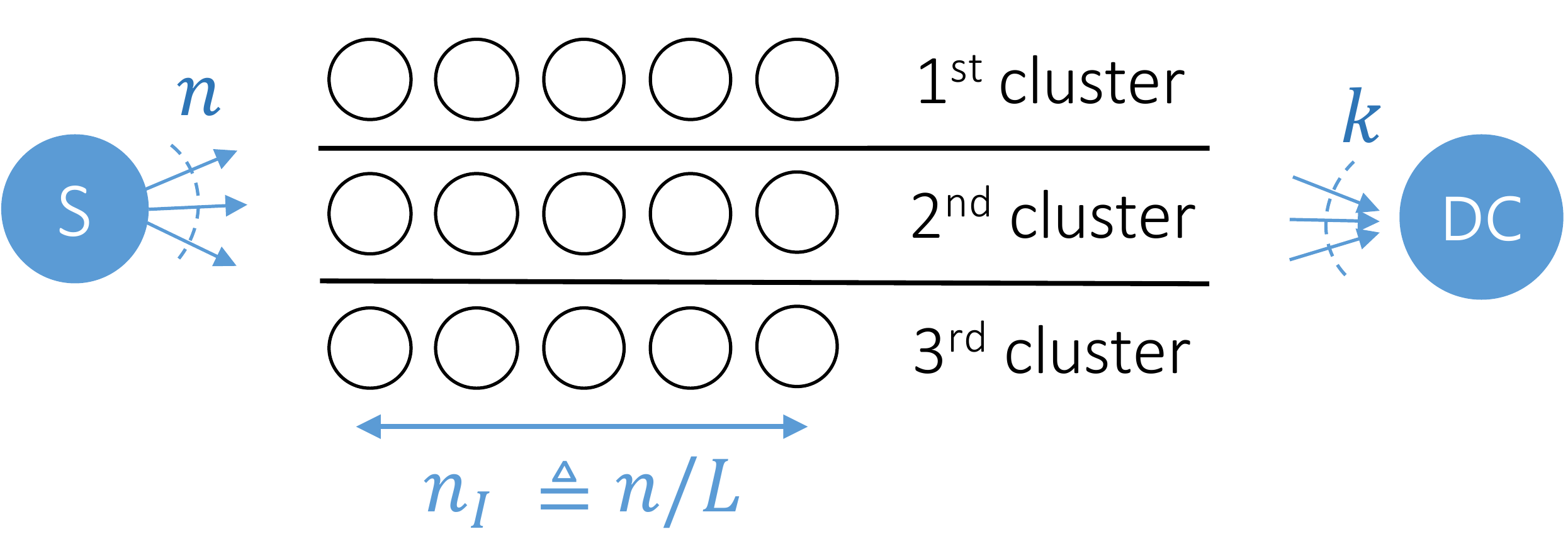}
	\caption{Clustered distributed storage system ($n=15,L=3$)}
	\label{Fig:Layered_DSS}
\end{figure}

\section{System Model}

\subsection{Clustered Distributed Storage System}

A distributed storage system with multiple clusters is shown in Fig. \ref{Fig:Layered_DSS}. 
Data from the source $\mathrm{S}$ is stored at $n$ nodes which are grouped into $L$ clusters. The number of nodes at each cluster is fixed and denoted as $n_I = n/L$. The storage size of each node is denoted as $\alpha$. 
When a node fails, a newcomer node is regenerated by contacting $d_I$ helper nodes within the same cluster, and $d_c$ helper nodes from other clusters.
The amount of data a newcomer node receives within the same cluster is 
$\gamma_I = d_I \beta_I$ (each node equally contributes $\beta_I$ ), and that from other clusters is $\gamma_c = d_c \beta_c$  (each node equally contributes $\beta_c$).
Fig. \ref{Fig:Repair Process in clustered DSS} illustrates an example of information flow graph representing the repair process in a clustered DSS.
We assume that $d_c$ and $d_I$ have the maximum possible values ($d_c = n-n_I, d_I = n_I - 1$), since this setting most efficiently utilizes the system resources: node storage size and repair bandwidth \cite{dimakis2010network}. 
A data collector $\mathrm{DC}$ contacts any $k$ out of $n$ nodes in the clustered DSS. 

  \begin{figure}[!t]
	\centering
	\includegraphics[height=30mm]{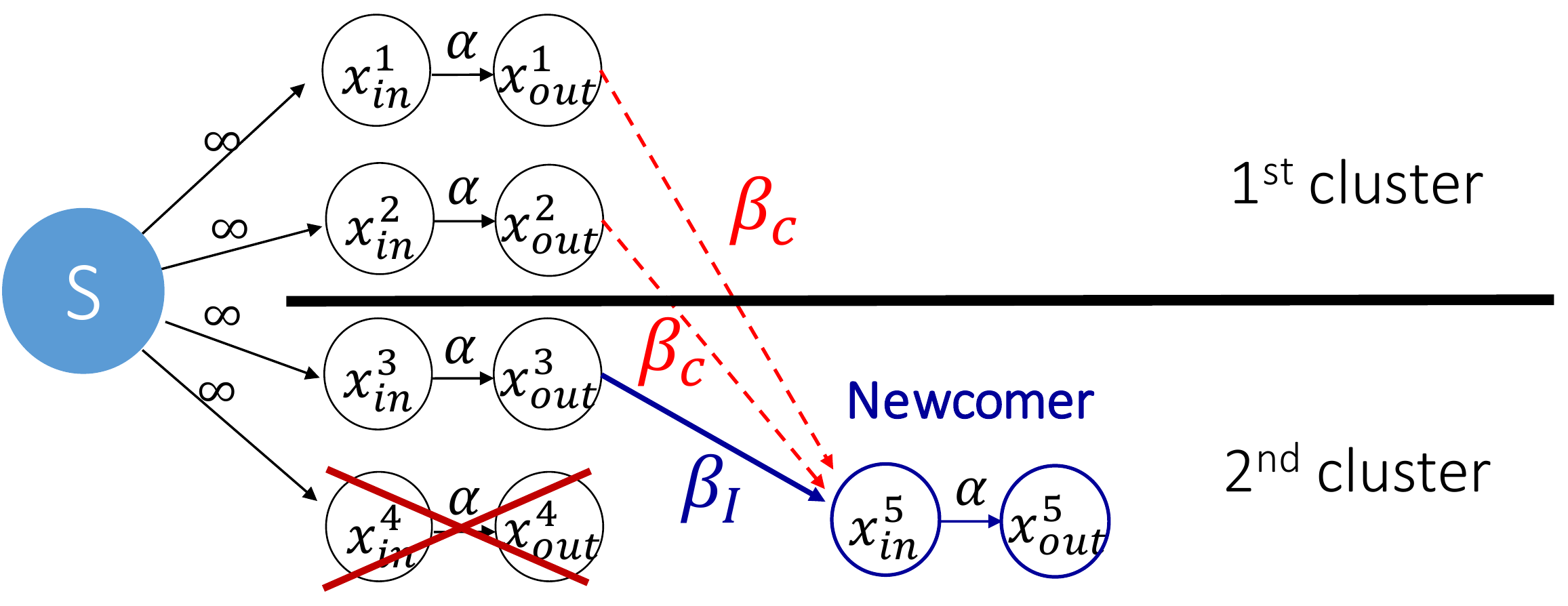}
	\caption{Repair process in clustered DSS ($n=4, L=2, d_I =1, d_c = 2$) }
	\label{Fig:Repair Process in clustered DSS}
\end{figure}

%

\subsection{Problem Formulation}

Consider a clustered DSS with fixed $n, k, L$ values. In this model, we want to find the set of \textit{feasible} parameters ($\alpha, \gamma_I, \gamma_c$) which enables storing data of size $\mathcal{M}$. 
In order to find the feasible set, min-cut analysis on the information flow graph is required, similar to \cite{dimakis2010network}.
Depending on the failure-repair process and $k$ nodes contacted by $\mathrm{DC}$, various information flow graphs can be obtained. 

Let $\mathcal{G}$ be the set of all possible flow graphs. Denote the graph with minimum min-cut as $G^*$.
Consider arbitrary information flow graph $G \in \mathcal{G}$.
Based on the max-flow min-cut theorem in \cite{ahlswede2000network}, 
the maximum information flow from the source to the data collector for $G$ is greater than equal to 
\begin{equation}\label{Eqn:capacity expression}
\mathcal{C}(\alpha, \gamma_I, \gamma_c) \triangleq \text{(min-cut of }G^* \text{)},
\end{equation}
which is called the \textit{capacity} of the system. 
In order to send data $\mathcal{M}$ from the source to the data collector, $\mathcal{C} \geq \mathcal{M}$ should be satisfied. 
Moreover, if $\mathcal{C} \geq \mathcal{M}$ is satisfied, there exists a linear network coding scheme \cite{ahlswede2000network} to store a file with size $\mathcal{M}$. 
Therefore, the set of $(\alpha, \gamma_I, \gamma_c)$ points which satisfies $\mathcal{C} \geq \mathcal{M}$ is \textit{feasible} in the sense of reliably storing the original file of size $\mathcal{M}$.

This paper first finds the min-cut minimizer $G^* \in \mathcal{G}$ and then obtains capacity $\mathcal{C} (\alpha, \gamma_I, \gamma_c)$, which is done in section \ref{Section:main results}.
Given that the typical intra-cluster communication bandwidth is larger than the cross-cluster bandwidth in real systems, 
we assume $\beta_I \geq \beta_c$ throughout the present paper.

\newtheorem{theorem}{Theorem}
\newtheorem{lemma}{Lemma}
\newtheorem{corollary}{Corollary}
\newtheorem{definition}{Definition}
\newtheorem{prop}{Proposition}

\section{Capacity of clustered DSS}\label{Section:main results}
In this section, a closed-form solution for capacity $\mathcal{C}(\alpha,  \gamma_I, \gamma_c)$ in (\ref{Eqn:capacity expression}) is obtained by specifying the min-cut minimizer $G^*$.
For a fixed $G \in \mathcal{G}$, let $\{x_{out}^{c_i} \}_{i=1}^{k}$ be the set of output nodes contacted by the data collector, ordered by topological sorting. Note that every directed acyclic graph can be topologically sorted \cite{bang2008digraphs}, where vertex $u$ is followed by vertex $v$ if there exists a directed edge from $u$ to $v$.
Depending on the selection of $k$ output nodes (among $n$ nodes) and ordering of the selected nodes, different kinds of flow graphs with possibly different min-cut values are obtained. Therefore, in order to obtain min-cut minimizing flow graph $G^*$, we need to specify $i)$ the optimal ordering method of the given $k$ output nodes and $ii)$ the optimal selection method of $k$ output nodes, which are stated as Lemmas 1 and 2, respectively.
A selection method is mathematically expressed as a selection vector $\bm{s}$, while an ordering method is expressed as an ordering vector $\bm{\pi}$, defined in the following subsection.


\subsection{Selection vector, ordering vector and min-cut}


\begin{definition}\label{Def:Selection vector}
Let arbitrary $k$ nodes are selected as the output nodes $\{x_{out}^{c_i}\}_{i=1}^k$. Label each cluster by the number of selected nodes in a descending order. In other words, the $1^{st}$ cluster contains a maximum number of selected nodes, and the $L^{th}$ cluster contains a minimum number of selected nodes. Under this setting, define the selection vector $\bm{s} = [s_1, s_2, \cdots, s_L]$ where $s_i$ is the number of selected nodes in the $i^{th}$ cluster. 
\end{definition}

  \begin{figure}[!t]
    \centering
    \includegraphics[height=25mm]{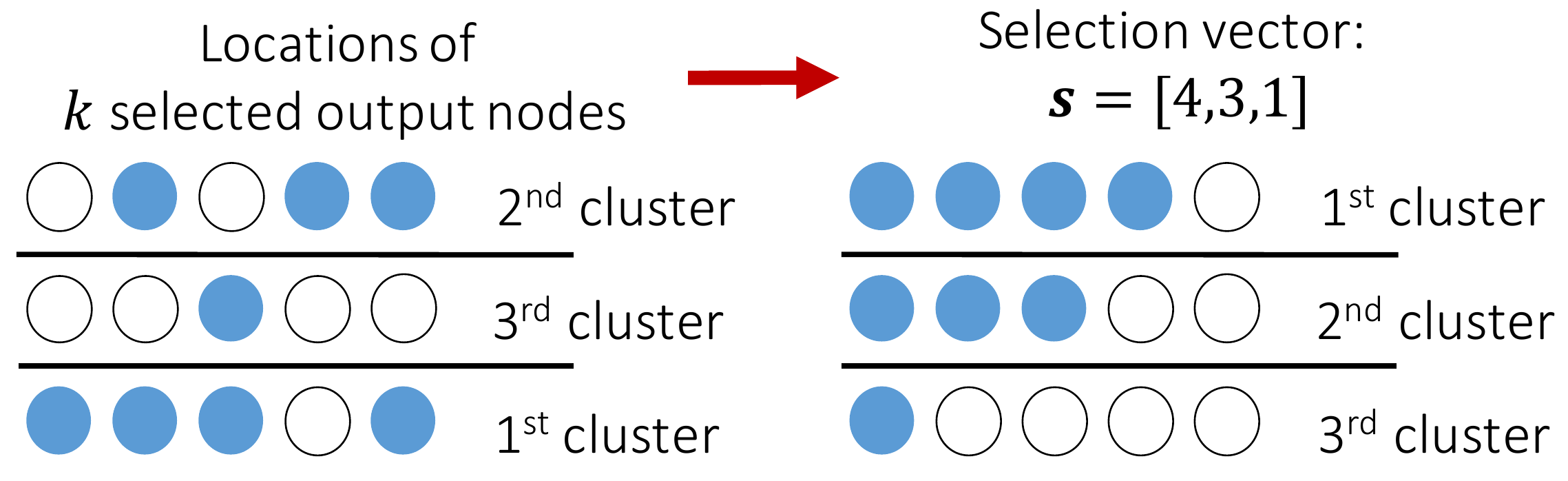}
    \caption{Obtaining the selection vector for given $k$ output nodes  ($n=15, k=8, L=3$)}
    \label{Fig:Selection Vector}
\end{figure}

For the selected $k$ output nodes, the corresponding selection vector $\bm{s}$ is illustrated in Fig. \ref{Fig:Selection Vector}. 
From the definition of selection vector, the set of possible selection vectors can be specified as follows.
\begin{equation*}
\mathcal{S} \triangleq  \left\{\bm{s} =[s_1, \cdots, s_L] : 0 \leq s_i \leq n_I, s_{i+1} \leq s_{i}, \sum_{i=1}^{k} s_i = k \right\}
\end{equation*}



By using Definition \ref{Def:Selection vector}, every selection of $k$ nodes can be assigned to a selection vector $\bm{s} \in \mathcal{S}$.
Note that even though ${n\choose k}$ different selections exist, the min-cut value is only determined by the corresponding selection vector $\bm{s}$. To be specific, consider different selections $\varsigma_1, \varsigma_2$ assigned to the same $\bm{s}$ vector. 
Then, every possible flow graph originated from $\varsigma_1$ is isomorphic to an element of possible flow graphs originated from $\varsigma_2$, and vice versa. Thus, $\varsigma_1$ and $\varsigma_2$ have the identical set of min-cut values.
Therefore, comparing the min-cut values of all $\vert \mathcal{S} \vert$ possible selection vectors $\bm{s}$ is enough; it is not necessary to compare the min-cut values of ${n\choose k}$ selection methods.

Now, we define the ordering vector $\bm{\pi}$ for a given selection vector $\bm{s}$. 
\begin{definition}\label{Def:Ordering vector}
Let the locations of $k$ output nodes $\{x_{out}^{c_i}\}_{i=1}^k$ be fixed, with a corresponding selection vector $\bm{s} = [s_1, \cdots, s_L]$. Then, for arbitrary ordering of $k$ output nodes, define the ordering vector $\bm{\pi} = [\pi_1, \cdots, \pi_k]$ where $\pi_i$ is the index of cluster which contains $x_{out}^{c_i}$.
\end{definition}
For a given $\bm{s}$, the ordering vector $\bm{\pi}$ corresponding to an arbitrary ordering of $k$ nodes is illustrated in Fig. \ref{Fig:Ordering Vector}. 
In this figure (and the following figures in this paper), the number $i$ written inside each node means that the node is $x^{c_i}$.

From the definition, an ordering vector $\bm{\pi} \in \Pi(\bm{s})$ has $s_l$ components with value $l$, for all $l\in \{1, \cdots, L\}$. 
The set of possible ordering vectors can be specified as follows.
\begin{equation*}
\Pi(\bm{s}) = \left\lbrace \bm{\pi} = [\pi_1, \cdots, \pi_k] : \sum_{i=1}^{k} \mathds{1}_{\pi_i = l} = s_l \ \forall l \in \{1,\cdots,L\} \right\rbrace 
\end{equation*}
Here, $\mathds{1}_{\pi_i = l}$ is an indicator function which has value $1$ if $\pi_i = l$, and 0 otherwise.
Note that for given $k$ selected nodes, there exists $k! $ different ordering methods. However, the min-cut value is only determined by 
the corresponding ordering vector $\bm{\pi} \in \Pi(\bm{s})$ (based on flow graph analysis, similar to compressing ${n\choose k}$ selection methods to $\vert \mathcal{S} \vert$ selection vectors). 
Therefore, comparing the min-cut values of all possible ordering vectors $\bm{\pi}$ is enough; it is not necessary to compare the min-cut values of all $k!$ ordering methods.

  \begin{figure}[!t]
	\centering
	\includegraphics[height=25mm]{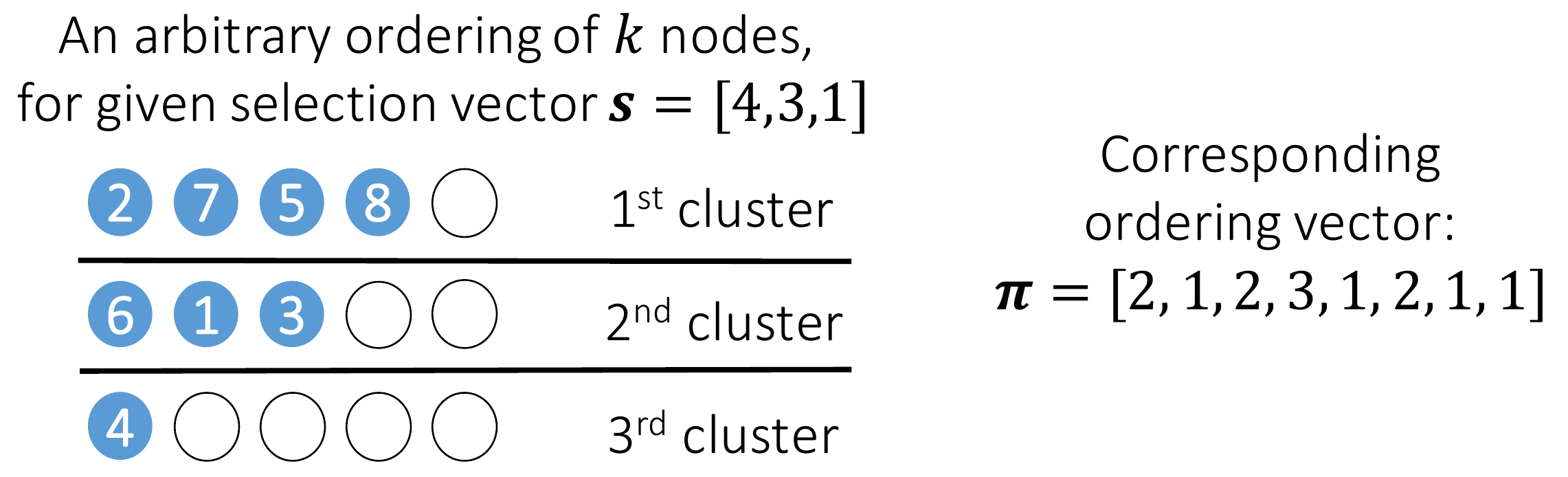}
	\caption{Obtaining the ordering vector for given an arbitrary order of $k$ output nodes ($n=15, k=8, L=3$)}
	\label{Fig:Ordering Vector}
\end{figure}



  \begin{figure}[!t]
	\centering
	\includegraphics[height=45mm]{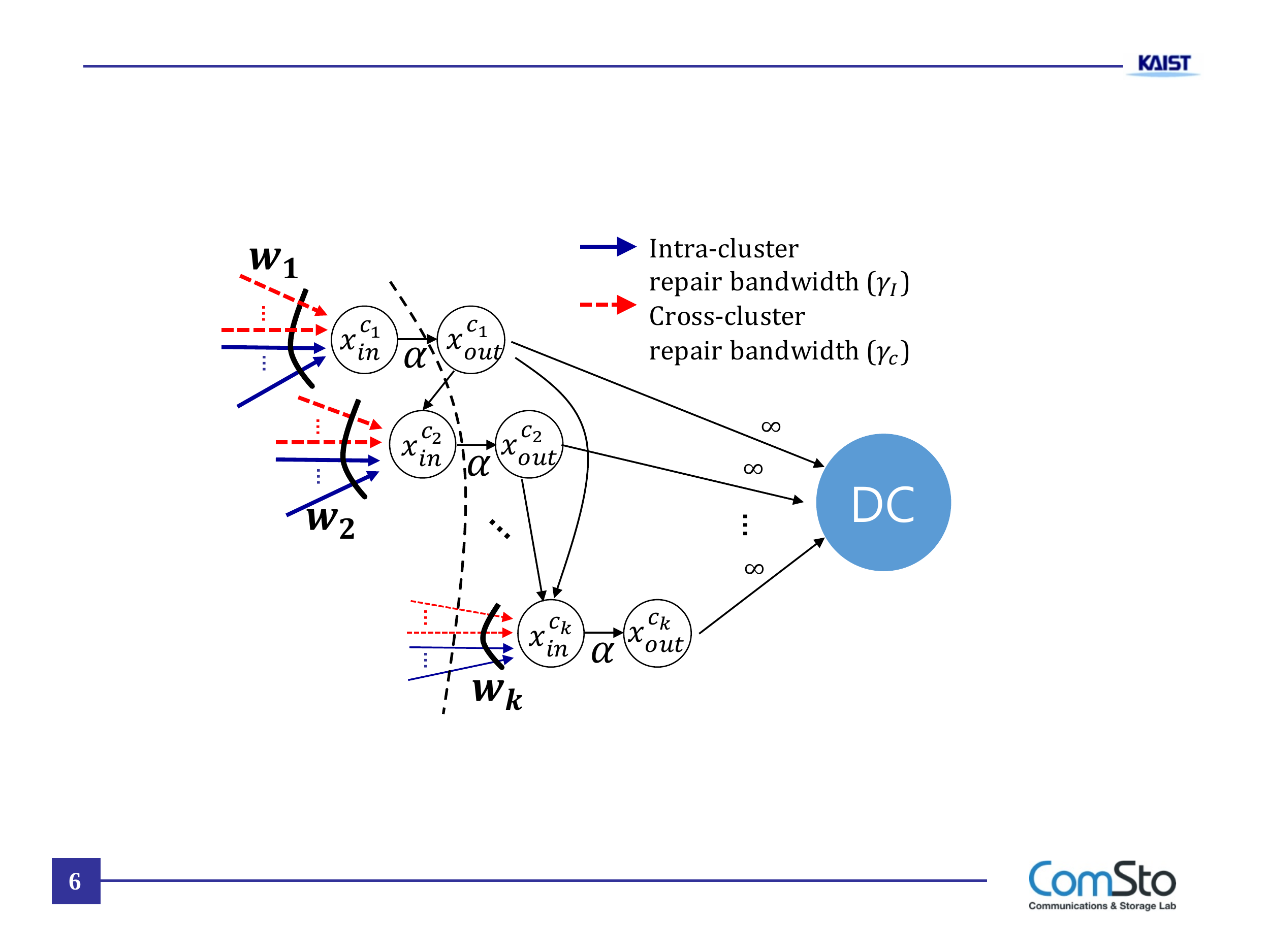}
	\caption{Information flow graph for obtaining min-cut in Proposition 1}
	\label{Fig:Flow graph for obtaining min-cut}
\end{figure}

Now, we express the min-cut value as a function of selection vector $\bm{s}$ and ordering vector $\bm{\pi}$. 

\begin{prop}\label{Prop:mincut expression}
	For a given $\bm{s}$, consider arbitrary ordering vector $\bm{\pi} \in \Pi(\bm{s})$, where $\bm{\pi} = [\pi_1, \cdots, \pi_k]$.
	Then, the minimum min-cut value among possible flow graphs is
	\begin{equation}\label{Eqn:min_cut}
	c_{min}(\bm{s, \pi}) = \sum_{i=1}^k \min\{\omega_i(\bm{\pi}), \alpha\}
	\end{equation}
	where
	\begin{align} 
	\omega_i(\bm{\pi}) &= a_i(\bm{\pi}) \beta_I + (n-i-a_i(\bm{\pi})) \beta_c \label{Eqn:weight vector}\\
	a_i(\bm{\pi}) &= n_I - 1 - \sum_{j=1}^{i-1} \mathds{1}_{\pi_j = \pi_i}. \label{Eqn:a_i}
	\end{align}
\end{prop}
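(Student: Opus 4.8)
The plan is to determine the minimum min-cut by establishing matching achievability (upper) and converse (lower) bounds on $c_{min}(\bm s,\bm\pi)$, adapting the information-flow-graph technique of \cite{dimakis2010network} to the two-bandwidth clustered setting. Concretely, I would show that \emph{some} flow graph consistent with $(\bm s,\bm\pi)$ admits an $\mathrm S$--$\mathrm{DC}$ cut of value $\sum_{i=1}^k\min\{\omega_i(\bm\pi),\alpha\}$, and that \emph{no} flow graph consistent with $(\bm s,\bm\pi)$ has a smaller min-cut; together these force the minimum over $\mathcal G$ to equal the claimed expression. Throughout I would write $t_i := \sum_{j=1}^{i-1}\mathds{1}_{\pi_j=\pi_i}$, so that $a_i(\bm\pi)=d_I-t_i=n_I-1-t_i$.

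For achievability I would construct the \emph{worst-case} flow graph $G^\dagger$ in which the $k$ contacted newcomers are repaired consecutively in the order prescribed by $\bm\pi$, and each newcomer $x^{c_i}$ draws as many of its helper edges as possible from the already-repaired contacted nodes: all $t_i$ earlier same-cluster nodes serve as intra-cluster helpers and all $(i-1)-t_i$ earlier other-cluster nodes as cross-cluster helpers. This assignment is feasible because $t_i\le n_I-1=d_I$ and $(i-1)-t_i\le n-n_I=d_c$, so the remaining \emph{fresh} helper edges number exactly $a_i$ of capacity $\beta_I$ and $n-i-a_i=d_c-((i-1)-t_i)$ of capacity $\beta_c$. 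I would then exhibit the node-by-node cut that, for each $i$, severs the storage edge $x_{in}^{c_i}\to x_{out}^{c_i}$ (cost $\alpha$) when $\alpha\le\omega_i$ and otherwise severs the fresh intra- and cross-helper edges (cost $\omega_i$); verifying this is a valid separating cut and summing the per-node costs yields $\sum_{i=1}^k\min\{\omega_i,\alpha\}$.

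For the converse I would take an arbitrary $G\in\mathcal G$ consistent with $(\bm s,\bm\pi)$ and an arbitrary finite cut $(U,\bar U)$ with $\mathrm S\in U,\ \mathrm{DC}\in\bar U$; since every $x_{out}^{c_i}$ is joined to $\mathrm{DC}$ by an infinite-capacity edge, all of them must lie in $\bar U$. Processing the contacted output nodes in topological order, each $c_i$ with $x_{in}^{c_i}\in U$ pays $\alpha$, while each $c_i$ with $x_{in}^{c_i}\in\bar U$ pays for those of its incoming helper edges whose source output node lies in $U$. Because every helper of $x^{c_i}$ precedes it topologically, at most $t_i$ of its $d_I$ intra-cluster helpers and at most $(i-1)-t_i$ of its $d_c$ cross-cluster helpers can coincide with the earlier contacted nodes, so at least $a_i$ intra- and $n-i-a_i$ cross-helper edges are fresh; the crux is that these fresh edges must be severed, which would give a per-node contribution of at least $\omega_i$ and a total of at least $\sum_{i=1}^k\min\{\omega_i,\alpha\}$. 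Finally, writing $\omega_i = d_I\beta_I + d_c\beta_c - (i-1)\beta_c - t_i(\beta_I-\beta_c)$ shows $\omega_i$ decreases in $t_i$ precisely because $\beta_I\ge\beta_c$, confirming that the maximal-overlap graph $G^\dagger$ is the minimizer and matching the two bounds.

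The main obstacle is the converse charging argument: a ``fresh'' helper of $c_i$ need not lie in $U$, since its source output node could sit in $\bar U$ as a later-repaired or non-contacted node, so one must argue --- via the topological ordering and a careful per-cluster bookkeeping that avoids double-counting edges across different $i$ --- that the $\beta_I$- and $\beta_c$-weighted fresh edges are each paid for exactly once with the correct capacity. Keeping the two bandwidth types separate in this accounting, and invoking $\beta_I\ge\beta_c$ to certify that filling the intra-cluster helper slots first is what minimizes each $\omega_i$, is the delicate part on which the whole computation hinges.
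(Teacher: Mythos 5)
Your achievability half coincides with the paper's own argument: your $G^\dagger$ is exactly the graph of Fig.~\ref{Fig:Flow graph for obtaining min-cut} (every earlier contacted output node is wired as a helper of every later contacted input node, so that the fresh incoming capacity at $x_{in}^{c_i}$ is exactly $\omega_i(\bm{\pi})$), and your node-by-node cut --- sever the storage edge when $\alpha\le\omega_i$, otherwise the fresh helper edges --- is precisely the paper's cut $C$, giving the upper bound $\sum_{i=1}^k\min\{\omega_i(\bm{\pi}),\alpha\}$. On this side you are at parity with the paper, which likewise only sketches the proposition and defers the other direction to a statement that any flow graph consistent with $(\bm{s},\bm{\pi})$ has min-cut at least the right-hand side of (\ref{Eqn:min_cut}).

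The genuine gap is the converse, and it is the one you flag but do not close. Charging the \emph{contacted} nodes $c_1,\dots,c_k$ directly fails because a fresh helper edge of $x_{in}^{c_i}$ crosses the cut only if its source output node lies in $U$; an adversarial repair history can interpose non-contacted (or later-created) output nodes placed in $\bar{U}$ between the initial nodes and the contacted ones, so ``these fresh edges must be severed'' is false as stated. The standard repair, as in \cite{dimakis2010network}, is to process the topologically first $k$ output nodes \emph{lying in $\bar{U}$} rather than the contacted ones: the $j$-th such node has at most $j-1$ helpers inside $\bar{U}$, so its remaining helper edges provably cross the cut, and shield nodes thereby pay for themselves. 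But in the clustered setting this fix alone is still not enough: the per-node weight it produces depends on the cluster pattern of those first-$k$ nodes, which need not correspond to $\Pi(\bm{s})$ --- for instance, repeated failures within a single cluster make an all-same-cluster pattern possible, and since $\beta_I\ge\beta_c$ the weight drops by $\beta_I-\beta_c$ for every same-cluster predecessor, giving per-node bounds \emph{smaller} than $\omega_j(\bm{\pi})$. One must then show that in such shielded configurations the contacted nodes, which now occupy positions beyond $k$ in the topological order of $\bar{U}$, contribute enough additional cut capacity to restore the claimed bound, e.g., through accounting of the form $\min\{\alpha,x\}+\min\{\alpha,y\}\ge\min\{\alpha,x+y\}$ applied across shield and contacted nodes. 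That compensation argument is the substantive missing idea; your closing remark about ``careful per-cluster bookkeeping'' names the difficulty but does not supply its resolution.
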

Denote $\omega_i(\bm{\pi})$ in (\ref{Eqn:weight vector}) as the  \textit{$i^{th}$ weight value} of given ordering vector $\bm{\pi}$.
Here, we provide a proof sketch on Proposition \ref{Prop:mincut expression}. The formal proof will be given elsewhere.
Consider an arbitrary selection vector $\bm{s}$ and an ordering vector $\bm{\pi}$. 
Then, the $k$ nodes $\{x^{c_i}\}_{i=1}^k$ connected to the data collector are specified.
Consider an information flow graph $G$ illustrated in Fig. \ref{Fig:Flow graph for obtaining min-cut}, which satisfies the following: for every $i,j$ satisfying $1 \leq i < j \leq k$, there is a directed edge from $x_{out}^{c_i}$ to $x_{out}^{c_j}$.
In this graph, the sum of capacities of edges coming into $x_{in}^{c_i}$ (except those from $\{x_{out}^{c_t}\}_{t=1}^{i-1}$) turns out to be $\omega_i$ expressed in (\ref{Eqn:weight vector}).

Consider a set $C$ of edges generated as follows. For all $i$ satisfying $1\leq i \leq k$, compare values of $\omega_i$ and $\alpha$. If $\omega_i \leq \alpha$, include the edges coming into $x_{in}^{c_i}$ (except those from $\{x_{out}^{c_t}\}_{t=1}^{i-1}$) in the set $C$. Otherwise, include the edge from $x_{in}^{c_i}$ to $x_{out}^{c_i}$ to the set $C$.
Then, the generated set $C$ becomes a cut separating the source and the data collector. The sum of capacities of edges included in $C$ is obtained as $\sum_{i=1}^k \min\{\omega_i, \alpha\}$ in (\ref{Eqn:min_cut}).
Finally, any information flow graphs for given selection vector $\bm{s}$ and ordering vector $\bm{\pi}$ are shown to have min-cut values greater than or equal to RHS of (\ref{Eqn:min_cut}), which completes the proof.

We now state the useful property of $\omega_i$ defined in (\ref{Eqn:weight vector}).
\begin{prop}\label{Prop:weight vector}
	Let $n,k,L$ and selection vector $\bm{s}$ be fixed. Then, $\sum_{i=1}^k \omega_i(\bm{\pi})$ is constant irrespective of the ordering vector $\bm{\pi} \in \Pi(\bm{s})$. 
\end{prop}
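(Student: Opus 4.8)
The plan is to expand $\sum_{i=1}^k \omega_i(\bm{\pi})$ via the definition in (\ref{Eqn:weight vector}) and isolate the only part that can possibly depend on the ordering. Writing $\omega_i(\bm{\pi}) = a_i(\bm{\pi})(\beta_I-\beta_c) + (n-i)\beta_c$ and summing,
\[
\sum_{i=1}^k \omega_i(\bm{\pi}) = (\beta_I - \beta_c)\sum_{i=1}^k a_i(\bm{\pi}) + \beta_c \sum_{i=1}^k (n-i).
\]
The second term depends only on $n$ and $k$, hence is automatically constant over all $\bm{\pi} \in \Pi(\bm{s})$. Thus the entire claim reduces to showing that $\sum_{i=1}^k a_i(\bm{\pi})$ is independent of the chosen ordering.

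Next I would substitute the definition of $a_i(\bm{\pi})$ from (\ref{Eqn:a_i}) to obtain
\[
\sum_{i=1}^k a_i(\bm{\pi}) = k(n_I - 1) - \sum_{i=1}^k \sum_{j=1}^{i-1} \mathds{1}_{\pi_j = \pi_i}.
\]
Here $k(n_I-1)$ is a constant, so the remaining task is to evaluate the double indicator sum. The key observation — and the crux of the argument — is that this double sum is a pair-counting quantity: summing $\mathds{1}_{\pi_j = \pi_i}$ over $j < i$ and then over $i$ enumerates each unordered pair of selected output nodes lying in a common cluster exactly once.

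Because cluster $l$ contains precisely $s_l$ of the selected nodes — this is exactly the membership constraint $\sum_{i=1}^k \mathds{1}_{\pi_i = l} = s_l$ built into the definition of $\Pi(\bm{s})$ — the number of same-cluster pairs contributed by cluster $l$ is $\binom{s_l}{2}$, irrespective of where those nodes appear in the ordering. Hence
\[
\sum_{i=1}^k \sum_{j=1}^{i-1} \mathds{1}_{\pi_j = \pi_i} = \sum_{l=1}^L \binom{s_l}{2},
\]
a function of the selection vector $\bm{s}$ alone. Combining the three displays yields $\sum_{i=1}^k \omega_i(\bm{\pi}) = (\beta_I - \beta_c)\bigl(k(n_I-1) - \sum_{l=1}^L \binom{s_l}{2}\bigr) + \beta_c \sum_{i=1}^k (n-i)$, which is manifestly independent of $\bm{\pi}$, completing the proof.

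I do not anticipate a genuine obstacle here, as the argument is a rearrangement followed by a single combinatorial identity. The only point demanding care is the reinterpretation of the double indicator sum as a count of same-cluster pairs: I must verify that the nested summation enumerates each such pair exactly once, and then invoke the $\Pi(\bm{s})$ membership constraint to conclude that the per-cluster count is $\binom{s_l}{2}$, depending only on $\bm{s}$ and not on the order in which those $s_l$ nodes are listed.
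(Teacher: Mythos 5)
Your proof is correct and follows essentially the same route as the paper: both reduce the claim to showing $\sum_{i=1}^k a_i(\bm{\pi})$ is ordering-independent, and both evaluate the double indicator sum cluster by cluster, your $\binom{s_l}{2}$ being exactly the paper's $0+1+\cdots+(s_l-1)$. The only cosmetic difference is that you absorb the $b_i$ terms by writing $\omega_i = a_i(\beta_I-\beta_c) + (n-i)\beta_c$ up front, whereas the paper carries $b_i = n-i-a_i$ separately and eliminates it at the end; the underlying algebra is identical.
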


\begin{proof}
Consider a fixed selection vector $\bm{s}= [s_1, \cdots, s_L]$.
For an arbitrary ordering vector $\bm{\pi} \in \Pi(\bm{s})$, let $b_i(\bm{\pi})= n-i-a_i(\bm{\pi})$ where $a_i(\bm{\pi})$ is at (\ref{Eqn:a_i}). 
For simplicity, we denote $a_i(\bm{\pi})$ and $b_i(\bm{\pi})$ as $a_i$ and $b_i$, respectively.
Then, 
\begin{equation}\label{Eqn:proof of proposition 2}
c_0 \triangleq \sum_{i=1}^{k}(a_i + b_i) = \sum_{i=1}^{k}(n-i),
\end{equation}
where $c_0$ is constant.
Note that $\sum_{i=1}^{k}a_i = k(n_I - 1) - \sum_{i=1}^{k} \sum_{j=1}^{i-1} \mathds{1}_{\pi_j = \pi_i}$. Also, from the definition of $\Pi(\bm{s})$, the ordering vector $\bm{\pi}$ has $s_l$ components with value $l$, for all $l\in \{1, \cdots, L\}$. 
Let $I_l \triangleq \{ i \in \{ 1, \cdots, k\} : \pi_i = l \}$ for $l=1,\cdots, L$. Then, 
$\sum_{i\in I_l} \sum_{j=1}^{i-1} \mathds{1}_{\pi_j = \pi_i} = 0 + 1 + \cdots + (s_l-1)$.
Therefore,
\begin{equation*}
c_1 \triangleq \sum_{i=1}^{k} \sum_{j=1}^{i-1} \mathds{1}_{\pi_j = \pi_i} = \sum_{l=1}^{L} \sum_{i\in I_l} \sum_{j=1}^{i-1} \mathds{1}_{\pi_j = \pi_i} = \sum_{l=1}^{L} \sum_{t=0}^{s_l-1} t,
\end{equation*}
where $c_1$ is constant.
Thus, $\sum_{i=1}^{k}a_i  = k(n_I - 1) -c_1$ is also a constant. From (\ref{Eqn:proof of proposition 2}), we get $\sum_{i=1}^{k}b_i  = c_0 - \sum_{i=1}^{k}a_i$, which is another constant.
Therefore, $\sum_{i=1}^k \omega_i = (\sum_{i=1}^k a_i) \beta_I + (\sum_{i=1}^k b_i) \beta_c$ is constant for every ordering vector $\bm{\pi} \in \Pi(\bm{s})$.
\end{proof}

\subsection{Two Lemmas Supporting the Main Theorem}

Now, we state our first main Lemma, which specifies the optimal ordering vector $\bm{\pi}$ which minimizes $c_{min}(\bm{s, \pi})$ for an arbitrary selection vector $\bm{s}$.

\begin{lemma} \label{Lemma:optimal ordering vector}
Let $\bm{s} \in \mathcal{S}$ be an arbitrary selection vector. Then, a vertical ordering vector $\bm{\pi}_v$ obtained by Algorithm \ref{Alg:vertical ordering} minimizes the min-cut.
In other words, $c_{min}(\bm{s}, \bm{\pi}_v)  \leq c_{min}(\bm{s}, \bm{\pi})$ holds for arbitrary $\bm{\pi} \in \Pi(\bm{s})$.
\end{lemma}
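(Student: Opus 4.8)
The plan is to prove optimality of the vertical ordering by an exchange (adjacent-transposition) argument resting on two facts: that $\min\{\cdot,\alpha\}$ is concave, and that by Proposition 2 the total weight $\sum_i \omega_i$ is invariant across $\bm{\pi}\in\Pi(\bm{s})$. First I would rewrite the weight in separated form $\omega_i(\bm{\pi}) = (n-i)\beta_c + a_i(\bm{\pi})(\beta_I-\beta_c)$, so that, since $\beta_I\ge\beta_c$, each $\omega_i$ is a nonnegatively-weighted combination of a \emph{position term} $n-i$ (fixed by the slot $i$) and an \emph{availability term} $a_i$ (set by the ordering). I would then record the structural observation that, for fixed $\bm{s}$, the multiset of availability values is independent of $\bm{\pi}$: cluster $l$ always contributes the values $n_I-1, n_I-2, \dots, n_I-s_l$ regardless of how its nodes are interleaved. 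Choosing an ordering therefore only chooses how these $a$-values are matched to the slots $1,\dots,k$, subject to the constraint that within each cluster the $a$-values decrease as the slot index grows.

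Second, I would analyze a single adjacent transposition of two slots $i,i+1$ holding nodes from different clusters. A short bookkeeping check shows such a swap merely exchanges the two availability values $a_i\leftrightarrow a_{i+1}$ and leaves every other $\omega_j$ (for $j\neq i,i+1$) untouched, because for $j>i+1$ the set of clusters occupying slots $1,\dots,j-1$ is unchanged. Hence only the pair $(\omega_i,\omega_{i+1})$ moves, and its sum is preserved. The key one-line inequality is that for a fixed sum $u+v$ the quantity $\min\{u,\alpha\}+\min\{v,\alpha\}$ is concave and symmetric in $(u,v)$, hence nonincreasing as the pair is spread farther apart. Since slot $i$ always carries the larger position coefficient $n-i$, placing the larger of the two $a$-values in slot $i$ widens $|\omega_i-\omega_{i+1}|$; therefore moving a larger availability value to the earlier slot never increases $\sum_i\min\{\omega_i,\alpha\}$.

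Finally I would assemble the global statement. The vertical ordering is exactly the one whose availability sequence is sorted nonincreasingly (round-robin across clusters places all nodes of availability $n_I-1$ first, then all of availability $n_I-2$, and so on), which is precisely the arrangement comonotone with the decreasing position coefficients and yields $\omega_1\ge\cdots\ge\omega_k$. Any ordering that is not $\bm{\pi}_v$ contains an adjacent inversion, a different-cluster slot pair where a smaller $a$ precedes a larger $a$, and a bubble-sort sequence of the improving swaps above converts it into $\bm{\pi}_v$ while never increasing $c_{min}(\bm{s},\cdot)$; note that same-cluster adjacent pairs are automatically noninverted (the earlier node has one fewer same-cluster predecessor, hence larger $a$) and that each swap preserves the within-cluster ordering constraint, so every intermediate vector stays in $\Pi(\bm{s})$. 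This gives $c_{min}(\bm{s},\bm{\pi}_v)\le c_{min}(\bm{s},\bm{\pi})$.

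I expect the main obstacle to be the second paragraph: making the adjacent-swap bookkeeping airtight, i.e.\ verifying that exactly the two affected terms change and that their availability values simply exchange, and phrasing the ``more spread decreases the concave sum'' step so the direction of improvement is unambiguous. Termination of the sorting process is routine once inversions are shown to strictly decrease, but some care is needed with ties in the $a$-values and with the fact that I characterize $\bm{\pi}_v$ through its sorted weight profile rather than through the literal steps of Algorithm 1.
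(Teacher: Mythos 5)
Your proposal is correct, but it reaches the lemma by a genuinely different route than the paper. The paper's argument is a dominance (majorization-style) one: it considers the running sums $S_t(\bm{\pi})=\sum_{i=1}^{t}\omega_i(\bm{\pi})$, asserts that a simultaneous running-sum maximizer over all prefixes $t$ must be the min-cut minimizer (using, together with Proposition 2's fixed total sum, the concavity of $\min\{\cdot,\alpha\}$), and then identifies the vertical ordering as the unique such maximizer. You instead give a local exchange argument: you isolate the invariant multiset of availability values $a_i$ contributed by each cluster, verify that an adjacent different-cluster transposition swaps exactly the two availability values while preserving their pairwise sum and all other $\omega_j$, and then invoke the spread-decreasing property of $u\mapsto\min\{u,\alpha\}+\min\{s-u,\alpha\}$ to show each inversion-fixing swap weakly decreases $c_{min}$, bubble-sorting any $\bm{\pi}$ into an ordering with the same weight profile as $\bm{\pi}_v$. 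Your swap bookkeeping is right (the swapped nodes keep their own $a$-values since they are from different clusters, and same-cluster pairs are never inverted, so intermediate vectors stay in $\Pi(\bm{s})$), and your handling of the tie/nonuniqueness issue is sound because all orderings with nonincreasing $a$-sequence share one $\omega$-sequence, $\bm{\pi}_v$ among them. In effect you give an elementary, constructive proof of the step the paper delegates to majorization machinery: the paper's route yields the clean global statement (prefix dominance plus uniqueness of the maximizer) in one stroke, while yours trades that for fully self-contained local verifications and avoids having to prove dominance for every prefix simultaneously; note also that you do not actually need Proposition 2, since pairwise sum preservation under each swap is all your argument uses.
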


\begin{algorithm}[t]
\caption{Generate vertical ordering $\bm{\pi}_v  $}
\label{Alg:vertical ordering}
\begin{algorithmic}
\STATE \textbf{Input:} $\bm{s} = [s_1, \cdots, s_L]$
\STATE \textbf{Output:} $\bm{\pi}_v = [\pi_1, \cdots, \pi_k]$
\STATE Initialization: $l \leftarrow 1$
\FOR{$i=1$ to $k$}
\IF{$s_l = 0$}
\STATE $l \leftarrow 1$
\ELSE
\STATE $\pi_i \leftarrow l$; $s_i \leftarrow s_i - 1$; $l \leftarrow l + 1$
\ENDIF
\ENDFOR
\end{algorithmic}
\end{algorithm}

  \begin{figure}[t]
	\centering
	\includegraphics[height=25mm]{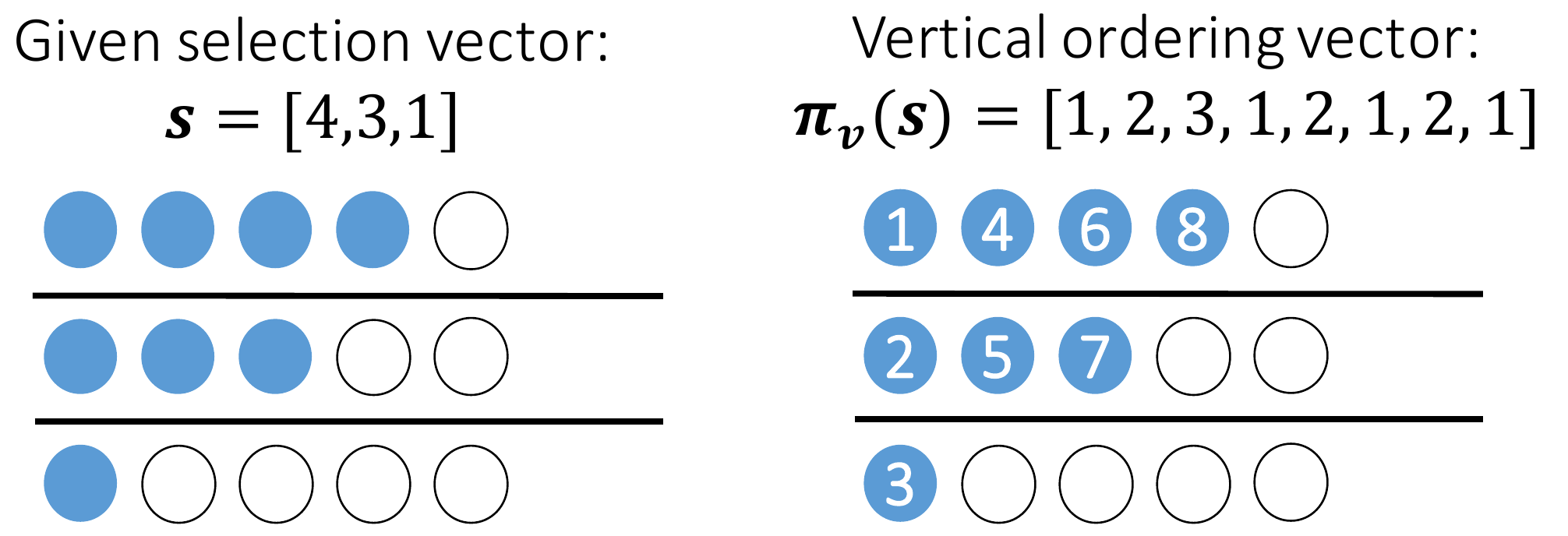}
	\caption{The vertical ordering vector $\bm{\pi}_v $ for given selection vector $\bm{s}=[4,3,1]$ (for $n=15, k=8, L=3$ case) }
	\label{Fig:vertical ordering}
\end{figure}

The vertical ordering vector is illustrated in Fig. \ref{Fig:vertical ordering}, for a given selection vector $\bm{s}$ as an example.
For $\bm{s}=[4,3,1]$, Algorithm 1 produces the corresponding vertical ordering vector $\bm{\pi}_v = [1,2,3,1,2,1,2,1]$. Note that the order of $k=8$ output nodes is  illustrated in Fig. \ref{Fig:vertical ordering}, as the numbers inside each node. 
Although the vertical ordering vector $\bm{\pi}_v$ depends on the selection vector $\bm{s}$, we use simplified notation $\bm{\pi}_v$.

Due to space limitation, here we just sketch the proof of  Lemma \ref{Lemma:optimal ordering vector}. The formal proof will be given elsewhere. 
Let a selection vector $\bm{s}$ be given.
Consider a running sum $S_t(\bm{\pi}) = \sum_{i=1}^{t} \omega_i (\bm{\pi})$ for an arbitrary ordering vector $\bm{\pi} \in \Pi(\bm{s})$.
Suppose there exists a running sum maximizer $\bm{\pi^*}$ which satisfies $S_t(\bm{\pi}) \leq S_t(\bm{\pi^*}) \ $
for every $t \in \{1, \cdots, k\}$ and every $\bm{\pi} \in \Pi(\bm{s})$.
Then, $\bm{\pi^*}$ turns out to be the min-cut minimizer. 
Showing that the vertical ordering vector $\bm{\pi}_v$ is the unique running sum maximizer completes the proof, which is omitted here.


Before stating our second main Lemma, here we define a special selection vector called the \textit{horizontal selection vector}. 
\begin{definition}\label{Def:horizontal selection vector}
The horizontal selection vector $\bm{s}_h = [s_1, \cdots, s_L] \in \mathcal{S}$ is defined as:
\begin{equation}
s_i =
\begin{cases}
n_I,    &    i \leq \floor{\frac{k}{n_I}} \nonumber\\
k-\floor{\frac{k}{n_I}}n_I,    & i = \floor{\frac{k}{n_I}} + 1 \nonumber\\
0		&     i > \floor{\frac{k}{n_I}} + 1.
\end{cases}
\end{equation} 
\end{definition}

  \begin{figure}[t]
	\centering
	\includegraphics[height=25mm]{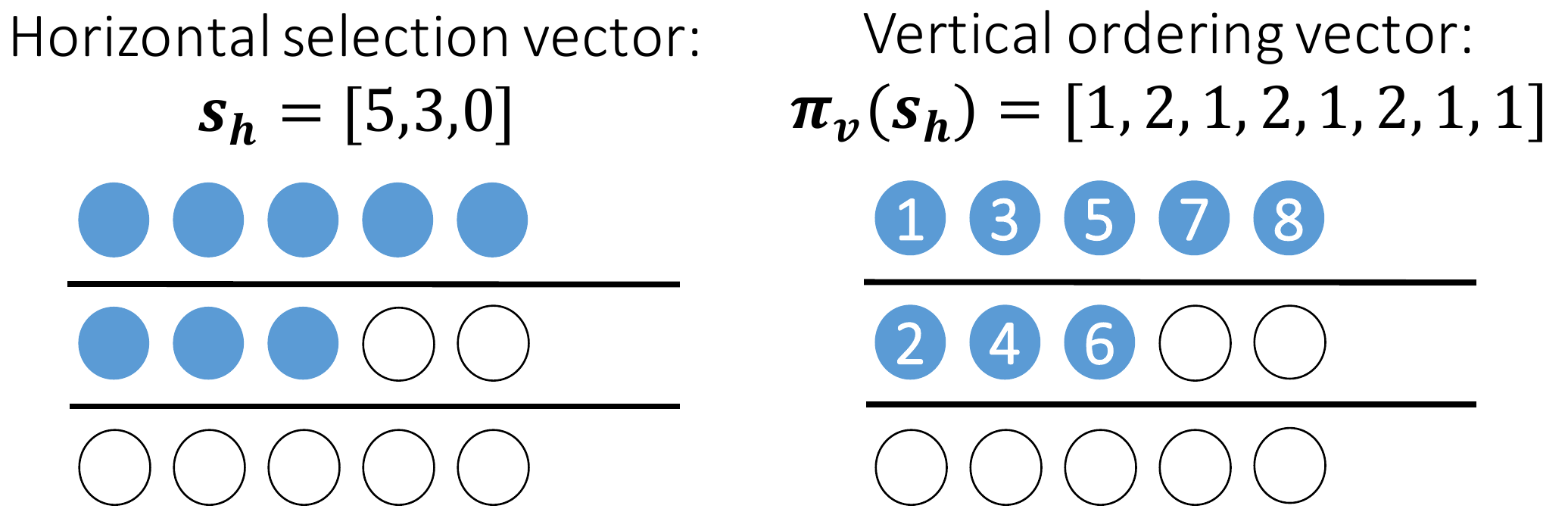}
	\caption{The optimal selection vector $\bm{s}_h$  and the optimal ordering vector $\bm{\pi}_v $ (for $n=15, k=8, L=3$ case)}
	\label{Fig:horizontal selection}
\end{figure}

The graphical illustration of the horizontal selection vector is on the left side of Fig. \ref{Fig:horizontal selection}, in the case of $n=15, k=8, L=3$.
Now, our second main Lemma states that the horizontal selection vector minimizes the min-cut. 

\begin{lemma} \label{Lemma:optimal selection vector}
When the vertical ordering $\bm{\pi}_v$ is selected,
 the horizontal selection vector $\bm{s}_h$ minimizes the min-cut.
In other words, $c_{min}(\bm{s}_h, \bm{\pi}_v) \leq c_{min}(\bm{s}, \bm{\pi}_v ) \ \forall \bm{s}\in \mathcal{S}$.
\end{lemma}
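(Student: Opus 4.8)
The plan is to prove the statement by a discrete \emph{smoothing} argument on $\mathcal{S}$: I will show that any $\bm{s}\in\mathcal{S}$ can be transformed into $\bm{s}_h$ by a finite chain of elementary ``transfer'' moves, and that a single such move never increases $c_{min}(\cdot,\bm{\pi}_v)$. First I would set up notation tailored to the vertical ordering. Under $\bm{\pi}_v$, a node that is the $m$-th selected node inside its cluster has $\sum_{j<i}\mathds{1}_{\pi_j=\pi_i}=m-1$, so by (\ref{Eqn:a_i}) its value is $a_i=n_I-m$; I will call $m$ its \emph{level}. Writing $r_m=\lvert\{l:s_l\ge m\}\rvert$ for the number of level-$m$ nodes and $R_m=\sum_{m'\le m}r_{m'}$, Algorithm~\ref{Alg:vertical ordering} assigns to the level-$m$ nodes the consecutive global indices $R_{m-1}+1,\dots,R_m$. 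Rewriting the weight (\ref{Eqn:weight vector}) as $\omega_i=a_i(\beta_I-\beta_c)+(n-i)\beta_c$ and using $\beta_I\ge\beta_c$ shows the key monotonicity I will exploit: at a fixed global index $i$, $\omega_i$ is nonincreasing in the level $m$ of the node placed there.

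Next I would define the elementary move. Whenever $\bm{s}\neq\bm{s}_h$ there exist two clusters with $0<s_q\le s_p<n_I$ (a horizontal vector has at most one entry strictly between $0$ and $n_I$, so at least two ``partial'' clusters must exist otherwise), and I let $\bm{s}'$ be obtained by moving one selected node from cluster $q$ to cluster $p$ and re-sorting. This changes only the level counts, namely $r'_{s_q}=r_{s_q}-1$ and $r'_{s_p+1}=r_{s_p+1}+1$, all others unchanged. The crucial bookkeeping observation is that the affected global indices form the \emph{same} block $B=\{R_{s_q-1}+1,\dots,R_{s_p+1}\}$ in both $\bm{s}$ and $\bm{s}'$, because $R_{s_q-1}$ is unchanged and the two count changes cancel at level $s_p+1$; every node outside $B$ keeps its level, its within-level position, its index, and hence its $\omega_i$.

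The heart of the argument is then a pointwise comparison inside $B$. Listing the levels of $B$ in order of global index, the partial sums of the level counts of $\bm{s}'$ are at most those of $\bm{s}$ (one unit has been moved from a lower level to a higher one), so the level occupying each index of $B$ is at least as large under $\bm{s}'$ as under $\bm{s}$. By the monotonicity established above, this gives $\omega'_i\le\omega_i$ at \emph{every} index $i\in B$, and equality elsewhere. Since $\min\{\,\cdot\,,\alpha\}$ is nondecreasing, summing (\ref{Eqn:min_cut}) yields $c_{min}(\bm{s}',\bm{\pi}_v)\le c_{min}(\bm{s},\bm{\pi}_v)$ at once, with no separate case analysis for the cap $\alpha$. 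Finally I would observe that each move strictly increases $\sum_l\binom{s_l}{2}$, i.e.\ climbs the majorization order subject to the box constraint $s_l\le n_I$, of which $\bm{s}_h$ is the unique maximal element; hence iterating the move from any $\bm{s}$ terminates at $\bm{s}_h$, and chaining the inequalities gives $c_{min}(\bm{s}_h,\bm{\pi}_v)\le c_{min}(\bm{s},\bm{\pi}_v)$ for all $\bm{s}\in\mathcal{S}$.

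The step I expect to be the main obstacle is the reindexing of the global indices under the vertical ordering when the selection vector changes, since $\omega_i$ depends on $i$ as well as on the level. The argument succeeds precisely because this reindexing is confined to the common block $B$ and, within $B$, levels can only rise under the transfer; recognizing and proving these two facts is what converts the change into a clean pointwise domination of the weights and avoids any delicate handling of the $\min\{\,\cdot\,,\alpha\}$ nonlinearity.
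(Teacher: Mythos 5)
Your proof is correct, and it reaches the lemma by a route that is organizationally different from the paper's. The paper's own argument (given only as a sketch) is a one-shot global comparison: it asserts that the weight sequence of $(\bm{s}_h,\bm{\pi}_v)$ pointwise dominates from below that of any $(\bm{s},\bm{\pi}_v)$, i.e.\ $\omega_i^*\le\omega_i$ for all $i$, and then concludes immediately from (\ref{Eqn:min_cut}) since $\min\{\cdot,\alpha\}$ is nondecreasing. You establish exactly the same pointwise domination, but you factor it into a chain of elementary transfer moves (shifting one selected node from a less-occupied partial cluster to a more-occupied one), proving that each single move can only lower every weight; the two facts that make this work --- that the perturbation of the vertical-ordering index structure is confined to a common block $B$ because $R'_{s_p+1}=R_{s_p+1}$, and that within $B$ the level at every index can only rise, so $\omega'_i\le\omega_i$ by $\beta_I\ge\beta_c$ --- are verified correctly, as is the termination argument via the strictly increasing potential $\sum_l\binom{s_l}{2}$ whose unique maximal element in $\mathcal{S}$ is $\bm{s}_h$. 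What the exchange argument buys is modularity: the delicate reindexing under Algorithm \ref{Alg:vertical ordering} only ever has to be analyzed for a single-unit transfer, where it localizes cleanly, whereas the paper's direct comparison between an arbitrary $\bm{s}$ and $\bm{s}_h$ must manage the global reindexing in one step. The costs are minor: the extra (easy) termination and uniqueness step, and the observation, which you use implicitly and is worth stating, that re-sorting $\bm{s}'$ leaves the level counts $r_m$ (hence the weight sequence) unchanged. Both routes share the essential insight that pointwise domination of the $\omega_i$ sidesteps any case analysis on the cap $\alpha$.
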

The proof of this Lemma will be given elsewhere, but it is based on following logic. 
For the given horizontal selection vector $\bm{s}_h$ and the corresponding vertical ordering vector $\bm{\pi}_v$, denote the $i^{th}$ weight value in (\ref{Eqn:weight vector}) as $\omega_i^*$.
Similarly, for an arbitrary selection vector $\bm{s}$ and the corresponding vertical ordering vector $\bm{\pi}_v$, denote the $i^{th}$ weight value as $\omega_i$.
Then, it can be shown that
$\omega_i^* \leq \omega_i$ for $i=1, \cdots, k$. Therefore, the horizontal selection minimizes the min-cut directly from (\ref{Eqn:min_cut}).


\subsection{Main Theorem: Capacity of Clustered DSS}

Now, we state our main result in the form of a theorem which offers a closed-form solution for the capacity of the clustered DSS.
Note that setting $L=1$ or $\beta_I = \beta_c$ reduces to the capacity of non-clustered DSS obtained in \cite{dimakis2010network}.

\begin{theorem} \label{Thm:Capacity of clustered DSS}
Consider a $\beta_I \geq \beta_c$ case. 
The capacity of the clustered distributed storage system with parameters $(n,k,L,\alpha, \gamma_I, \gamma_c)$ is 
\begin{equation}\label{Eqn:Capacity of clustered DSS}
\mathcal{C}(\alpha, \gamma_I, \gamma_c) = \sum_{i=1}^{n_I} \sum_{j=1}^{g(i)} \min \{\alpha, x(i)\gamma_I + y(i,j) \gamma_c \},
\end{equation}
where
\begin{align}
x(i) &= \frac{n_I - i}{n_I - 1} \nonumber\\
y(i,j) &= \frac{n-(n_I-i)-\sum_{m=1}^{i-1}g(m) - j}{n - n_I}\nonumber\\
g(i) &=
\begin{cases}
\floor{\frac{k}{n_I}} + 1, & i \leq mod(k,n_I) \nonumber\\
\floor{\frac{k}{n_I}}, & otherwise.
\end{cases}
\end{align}
\end{theorem}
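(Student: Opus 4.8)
The plan is to first combine Lemmas \ref{Lemma:optimal ordering vector} and \ref{Lemma:optimal selection vector} to pin down the global min-cut minimizer, and then carry out an explicit re-indexing of the resulting sum. Since the capacity equals the smallest min-cut over all flow graphs in $\mathcal{G}$, and every graph is accounted for by some pair $(\bm{s},\bm{\pi})$ whose minimal min-cut is $c_{min}(\bm{s},\bm{\pi})$ by Proposition \ref{Prop:mincut expression}, we have $\mathcal{C} = \min_{\bm{s}\in\mathcal{S}}\min_{\bm{\pi}\in\Pi(\bm{s})} c_{min}(\bm{s},\bm{\pi})$. Lemma \ref{Lemma:optimal ordering vector} collapses the inner minimization to $c_{min}(\bm{s},\bm{\pi}_v)$ for each fixed $\bm{s}$, and Lemma \ref{Lemma:optimal selection vector} then collapses the outer minimization to $c_{min}(\bm{s}_h,\bm{\pi}_v)$. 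Hence $\mathcal{C} = c_{min}(\bm{s}_h,\bm{\pi}_v) = \sum_{p=1}^{k}\min\{\omega_p(\bm{\pi}_v),\alpha\}$ by (\ref{Eqn:min_cut}), and the entire task reduces to rewriting this single sum in the claimed doubly-indexed form.

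Next I would analyze the combinatorial structure of the vertical ordering applied to the horizontal selection. Writing $q=\floor{k/n_I}$ and $r=\mathrm{mod}(k,n_I)$ so that $k=qn_I+r$ with $0\le r<n_I$, the vector $\bm{s}_h$ fills $q$ clusters completely and places $r$ nodes in one additional cluster. Algorithm \ref{Alg:vertical ordering} then traverses the selected nodes round by round, collecting in round $i$ the $i^{th}$ selected node of every cluster that has at least $i$ selected nodes. Thus round $i$ contains exactly $g(i)=q+1$ nodes when $i\le r$ and $g(i)=q$ nodes when $i>r$, which matches the definition of $g(i)$ in the statement; moreover $\sum_{i=1}^{n_I} g(i)=n_Iq+r=k$ (rounds with $g(i)=0$ simply contribute empty inner sums). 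This justifies re-indexing the single index $p\in\{1,\dots,k\}$ as a pair $(i,j)$, where $i\in\{1,\dots,n_I\}$ is the round and $j\in\{1,\dots,g(i)\}$ is the position within the round, via $p=\sum_{m=1}^{i-1} g(m)+j$.

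Then I would evaluate $\omega_p$ for the node indexed by $(i,j)$. The key observation is that a node appearing in round $i$ is the $i^{th}$ selected node of its own cluster, so exactly $i-1$ previously ordered nodes lie in the same cluster; hence by (\ref{Eqn:a_i}), $a_p(\bm{\pi}_v)=n_I-1-(i-1)=n_I-i$, independently of $j$. Substituting $p$ into the cross-cluster count gives $n-p-a_p = n-(n_I-i)-\sum_{m=1}^{i-1}g(m)-j$. Finally, using the maximal helper-node counts $d_I=n_I-1$ and $d_c=n-n_I$, so that $\gamma_I=(n_I-1)\beta_I$ and $\gamma_c=(n-n_I)\beta_c$, I would rewrite $\omega_p = a_p\beta_I + (n-p-a_p)\beta_c = x(i)\gamma_I + y(i,j)\gamma_c$, turning $\sum_{p=1}^{k}\min\{\omega_p,\alpha\}$ into exactly (\ref{Eqn:Capacity of clustered DSS}).

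The main obstacle is the bookkeeping in the middle step: rigorously establishing the round-by-round structure of the vertical ordering on $\bm{s}_h$, verifying that the within-round count is precisely $g(i)$, and confirming the clean fact that $a_p$ depends only on the round $i$ and not on the within-round position $j$. Once these combinatorial identities are secured, the conversion from $(\beta_I,\beta_c)$ to $(x(i)\gamma_I, y(i,j)\gamma_c)$ is a routine substitution of $d_I$ and $d_c$.
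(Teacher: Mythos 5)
Your proposal is correct and takes essentially the same route as the paper: it combines Lemmas \ref{Lemma:optimal ordering vector} and \ref{Lemma:optimal selection vector} to identify $(\bm{s}_h,\bm{\pi}_v)$ as the global min-cut minimizer and then evaluates $c_{min}(\bm{s}_h,\bm{\pi}_v)$ from Proposition \ref{Prop:mincut expression}. Your explicit round-by-round re-indexing (establishing $g(i)$ as the round sizes, $a_p=n_I-i$ independent of the within-round position, and the substitution $\gamma_I=(n_I-1)\beta_I$, $\gamma_c=(n-n_I)\beta_c$) is sound and in fact supplies the bookkeeping that the paper compresses into the single assertion that $c_{min}(\bm{s}_h,\bm{\pi}_v)$ ``is summarized as'' the capacity formula.
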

\begin{proof}
	From Lemmas \ref{Lemma:optimal ordering vector} and \ref{Lemma:optimal selection vector}, we have 
	\begin{equation}
	\forall \bm{s} \in \mathcal{S}, \forall \bm{\pi} \in \Pi(\bm{s}), \ c_{min}(\bm{s}_h, \bm{\pi}_v) \leq c_{min}(\bm{s}, \bm{\pi}).
	\end{equation}
	Therefore, from the explanation on $G^*$ in the beginning of Section \ref{Section:main results},
	the min-cut minimizing flow graph $G^*$ is generated by selecting $k$ output nodes $\{x_{out}^{c_i}\}_{i=1}^k$ by the \textit{horizontal selection vector} $\bm{s}_h$ and ordering these nodes by the \textit{vertical ordering} $\bm{\pi}_v$, as illustrated in Fig. \ref{Fig:horizontal selection}.
	The min-cut value of $G^*$, or $c_{min}(\bm{s}_h, \bm{\pi}_v)$, is summarized as (\ref{Eqn:Capacity of clustered DSS}).
\end{proof}

\subsection{Relationship between $\mathcal{C}$ and $\kappa = \beta_c/\beta_I$}

In this subsection, we analyze the capacity of a clustered DSS as a function of a new important parameter $\kappa =  \beta_c/\beta_I$: the cross-cluster repair burden compared to the intra-cluster repair burden.
In Fig. \ref{Fig:capacity_versus_kappa_plot}, the capacity is plotted as a function of $\kappa$. The total repair bandwidth can be expressed as 
\begin{align}\label{Eqn:capacity versus kappa}
\gamma &= \gamma_I + \gamma_c = (n_I-1)\beta_I + (n-n_I)\beta_c \nonumber\\
&= \{n_I - 1 + (n-n_I)\kappa\} \beta_I.
\end{align}
Using this expression, the capacity is expressed as 
\begin{equation*} 
\mathcal{C}(\kappa) = \sum_{i=1}^{n_I} \sum_{j=1}^{g(i)} \min \{\alpha, \frac{(n-n_I)y(i,j)\kappa + (n_I-1)x(i) }{(n-n_I)\kappa + n_I - 1}   \gamma \}.
\end{equation*} 
For fair comparison on various $\kappa$ values, 
($n, k, L, \alpha, \gamma$) values are 
fixed for calculating the capacity. 
The capacity shows an increasing function of $\kappa$ as in Fig. \ref{Fig:capacity_versus_kappa_plot}. This implies that for given resources $\alpha$ and $\gamma$, allowing a larger $\beta_c$ (until it reaches $\beta_I$) is always beneficial, in terms of storing a larger file. For example, under the setting in Fig. \ref{Fig:capacity_versus_kappa_plot}, allowing $\beta_c = \beta_I$ (i.e., $\kappa = 1$) can store $\mathcal{M}=48$, while setting $\beta_c = 0$ (i.e., $\kappa = 0$) cannot achieve the same level of storage.
This result is consistent with the previous work on asymmetric repair in \cite{ernvall2013capacity}, which proved that the symmetric repair maximized the capacity. 
Therefore, when the total communication amount $\gamma$ is fixed, the reduction of the storage capacity is the cost we need to pay in order to reduce the communication burden $\beta_c$ across different clusters.

\begin{figure}[t]
	\centering
	\includegraphics[height=45mm]{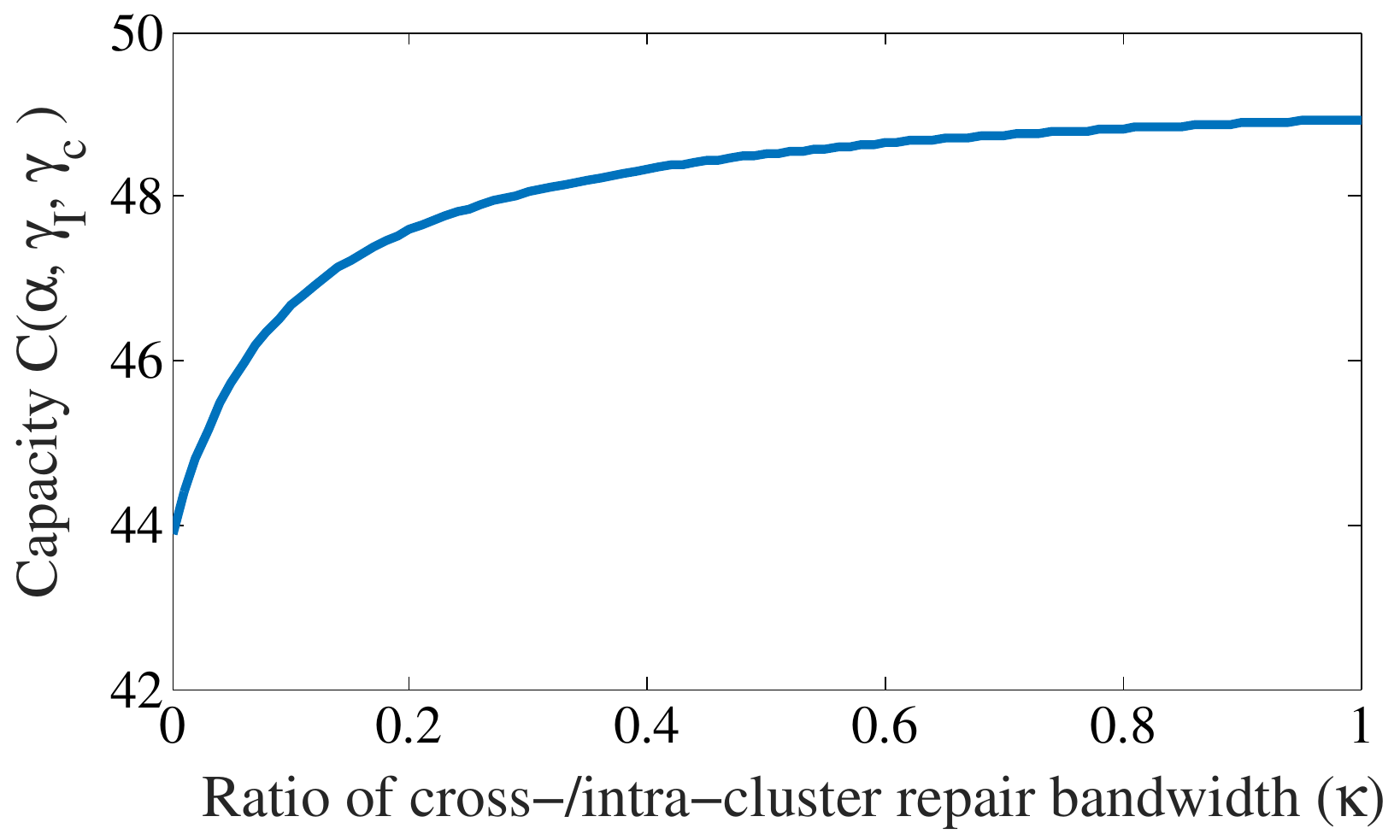}
	\caption{Capacity as a function of $\kappa$ ($n=100, k=85, L=10, \alpha = 1, \gamma = 1$)}
	\label{Fig:capacity_versus_kappa_plot}
\end{figure}

%
%

\section{Discussion on Feasible ($\alpha, \gamma_I, \gamma_c$)}\label{Section:analysis on feasible points}

In the previous section, we obtained the capacity of the clustered DSS. 
This section analyzes the feasible ($\alpha, \gamma_I, \gamma_c$) points which satisfy $\mathcal{C}(\alpha, \gamma_I, \gamma_c) \geq \mathcal{M}$ for a given file size $\mathcal{M}$.
Here, the behavior of feasible points are analyzed in two different perspectives.
First, we focus on the $\gamma_c = 0$ case, 
which enables the local repair within each cluster.
Second, we analyze the trade-off relation between $\gamma_I$ and $\gamma_c$.

\subsection{Intra-Cluster Repairable Condition ($\gamma_c = 0$)}

Under the constraint of zero cross-cluster repair bandwidth, a closed-form solution for the feasible points ($\alpha, \gamma_I$) can be obtained as the following corollary. 
This corollary can be proved by substituting $\gamma_c = 0$ to (\ref{Eqn:Capacity of clustered DSS}), and obtaining an equivalent condition for $(\alpha, \gamma_I)$ which satisfies $\mathcal{C}(\alpha, \gamma_I, 0) \geq \mathcal{M}$. 
A detailed proof will be given elsewhere.

\begin{corollary}\label{Corollary:gamma_c=0 case}
Consider a clustered DSS 
for storing data $\mathcal{M}$.
Then, for any $\gamma_I \geq \gamma_I^*(\alpha)$, the cross-cluster repair bandwidth can be reduced to zero, i.e., $\mathcal{C}(\alpha, \gamma_I, 0) \geq \mathcal{M}$, while it is impossible to reduce cross-cluster repair bandwidth to zero when $\gamma_I < \gamma_I^*(\alpha)$. The threshold function $\gamma_I^*(\alpha)$ can be obtained as:
	\begin{equation*} \label{Lemma3 result}
	\gamma_I^*(\alpha) = 
	\begin{cases}
	\frac{M-\delta_t\alpha}{\epsilon_t}, &  
	\ \alpha \in [\frac{M}{\epsilon_t/b_t + \delta_t} ,\frac{M}{\epsilon_{t-1}/b_{t-1} + \delta_{t-1}} )
	,\\
	& \ \ \ \ \ \ \ \ \ \ \ \ \ \ \ \ \  1 \leq t \leq n_I - 2\\
	 	\frac{M}{\epsilon_0 }, & \ \alpha \geq \frac{M}{\epsilon_0} 
	\end{cases}
	\end{equation*}
	where 
	\begin{align*}
	\epsilon_t &\triangleq \floor{\frac{k}{n_I}} \sum_{i=t}^{n_I-1}b_i + \sum_{i=t}^{k-\floor{\frac{k}{n_I}}n_I-1}b_i \\
	\delta_t &\triangleq
	\begin{cases}
	(\floor{\frac{k}{n_I}} + 1)t,  & \ 0 \leq t \leq k-\floor{\frac{k}{n_I}}n_I\\
	k-\floor{\frac{k}{n_I}}(n_I - t), & \ k-\floor{\frac{k}{n_I}}n_I < t \leq n_I - 1
	\end{cases}\\
	b_t &\triangleq 1-t/(n_I - 1).
	\end{align*}
\end{corollary}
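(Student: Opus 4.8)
The plan is to specialize the capacity formula of Theorem~\ref{Thm:Capacity of clustered DSS} to $\gamma_c=0$ and then invert the relation $\mathcal{C}(\alpha,\gamma_I,0)=\mathcal{M}$ viewed as an equation in $\gamma_I$. First I would set $\gamma_c=0$ in (\ref{Eqn:Capacity of clustered DSS}); since the term $y(i,j)\gamma_c$ vanishes, the summand no longer depends on $j$ and the inner sum collapses into a factor $g(i)$, giving
\[
\mathcal{C}(\alpha,\gamma_I,0)=\sum_{i=1}^{n_I} g(i)\,\min\{\alpha,\,x(i)\gamma_I\}.
\]
Writing $q\triangleq\floor{\frac{k}{n_I}}$, $r\triangleq k-q\,n_I$, and using $b_t=1-t/(n_I-1)$ from the statement, the key observation is $x(i)=b_{i-1}$, so the coefficients $x(i)\gamma_I$ are strictly decreasing in $i$ and the last term ($i=n_I$, where $b_{n_I-1}=0$) vanishes. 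Because each $\min\{\alpha,b_{i-1}\gamma_I\}$ is non-decreasing and continuous in $\gamma_I$, so is $\mathcal{C}(\alpha,\gamma_I,0)$; hence $\mathcal{C}(\alpha,\gamma_I,0)\geq\mathcal{M}$ holds exactly for $\gamma_I\geq\gamma_I^*(\alpha)$, where $\gamma_I^*(\alpha)$ is the unique root of $\mathcal{C}(\alpha,\gamma_I,0)=\mathcal{M}$. This monotonicity is what justifies the ``threshold'' phrasing of the corollary.

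The second step is the regime decomposition. Since $b_{i-1}\gamma_I$ decreases in $i$, at the threshold there is an integer $t$ for which the first $t$ indices are \emph{saturated} ($b_{i-1}\gamma_I\geq\alpha$, so the $\min$ equals $\alpha$) while the rest are not; precisely, $t$ is characterized by $b_t\gamma_I<\alpha\leq b_{t-1}\gamma_I$. Splitting the sum at this boundary and using $g(i)=q+1$ for $i\leq r$ and $g(i)=q$ for $i>r$, I would collect the saturated contribution as $\delta_t\alpha$ and the unsaturated one as $\epsilon_t\gamma_I$, so that inside regime $t$
\[
\mathcal{C}(\alpha,\gamma_I,0)=\delta_t\alpha+\epsilon_t\gamma_I.
\]
Identifying the coefficients with the stated $\delta_t,\epsilon_t$ requires separating the cases $t\leq r$ and $t>r$, because the weight-$(q{+}1)$ block (indices $i\leq r$) and the weight-$q$ block cross the saturation boundary differently; checking that the two-branch formula for $\delta_t$ and the empty-sum convention in $\epsilon_t$ reproduce both cases is where I expect the bulk of the work to lie.

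Finally, solving the affine equation yields $\gamma_I^*(\alpha)=(\mathcal{M}-\delta_t\alpha)/\epsilon_t$, matching the stated formula. To determine the $\alpha$-interval on which regime $t$ governs, I substitute this $\gamma_I^*$ into the defining inequalities $b_t\gamma_I<\alpha\leq b_{t-1}\gamma_I$; rearranging turns them into $\frac{\mathcal{M}}{\epsilon_t/b_t+\delta_t}<\alpha\leq\frac{\mathcal{M}}{\epsilon_t/b_{t-1}+\delta_t}$. The last piece is the algebraic identity $\frac{\epsilon_t}{b_{t-1}}+\delta_t=\frac{\epsilon_{t-1}}{b_{t-1}}+\delta_{t-1}$, which follows from $\epsilon_{t-1}-\epsilon_t=(q+\mathds{1}_{t\leq r})\,b_{t-1}$ together with $\delta_t-\delta_{t-1}=q+\mathds{1}_{t\leq r}$; this rewrites the upper endpoint in the form $\frac{\mathcal{M}}{\epsilon_{t-1}/b_{t-1}+\delta_{t-1}}$ appearing in the statement and simultaneously shows that consecutive pieces join continuously (both give $\gamma_I^*=\alpha/b_{t-1}$ at the shared endpoint). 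The boundary regimes are then read off directly: $t=0$ has no saturated index, so $\delta_0=0$ and $\gamma_I^*=\mathcal{M}/\epsilon_0$, valid for all $\alpha\geq\mathcal{M}/\epsilon_0$, while $\epsilon_{n_I-1}=0$ forces $t\leq n_I-2$, fixing the index range $1\leq t\leq n_I-2$ for the generic pieces.
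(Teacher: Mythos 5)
Your proposal is correct and follows exactly the route the paper indicates for this corollary: substitute $\gamma_c=0$ into the capacity expression of Theorem~\ref{Thm:Capacity of clustered DSS}, reduce it to the piecewise-linear form $\delta_t\alpha+\epsilon_t\gamma_I$ over saturation regimes, and invert in $\gamma_I$. The paper only sketches this step and defers the details; your regime decomposition, the identification of $\delta_t$ and $\epsilon_t$ in the two cases $t\leq k-\floor{k/n_I}n_I$ and $t> k-\floor{k/n_I}n_I$, and the endpoint identity $\epsilon_t/b_{t-1}+\delta_t=\epsilon_{t-1}/b_{t-1}+\delta_{t-1}$ supply precisely those missing details correctly (the open/closed endpoint conventions differ harmlessly, as your continuity observation shows).
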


\begin{figure}[t]
	\centering
	\includegraphics[height=55mm]{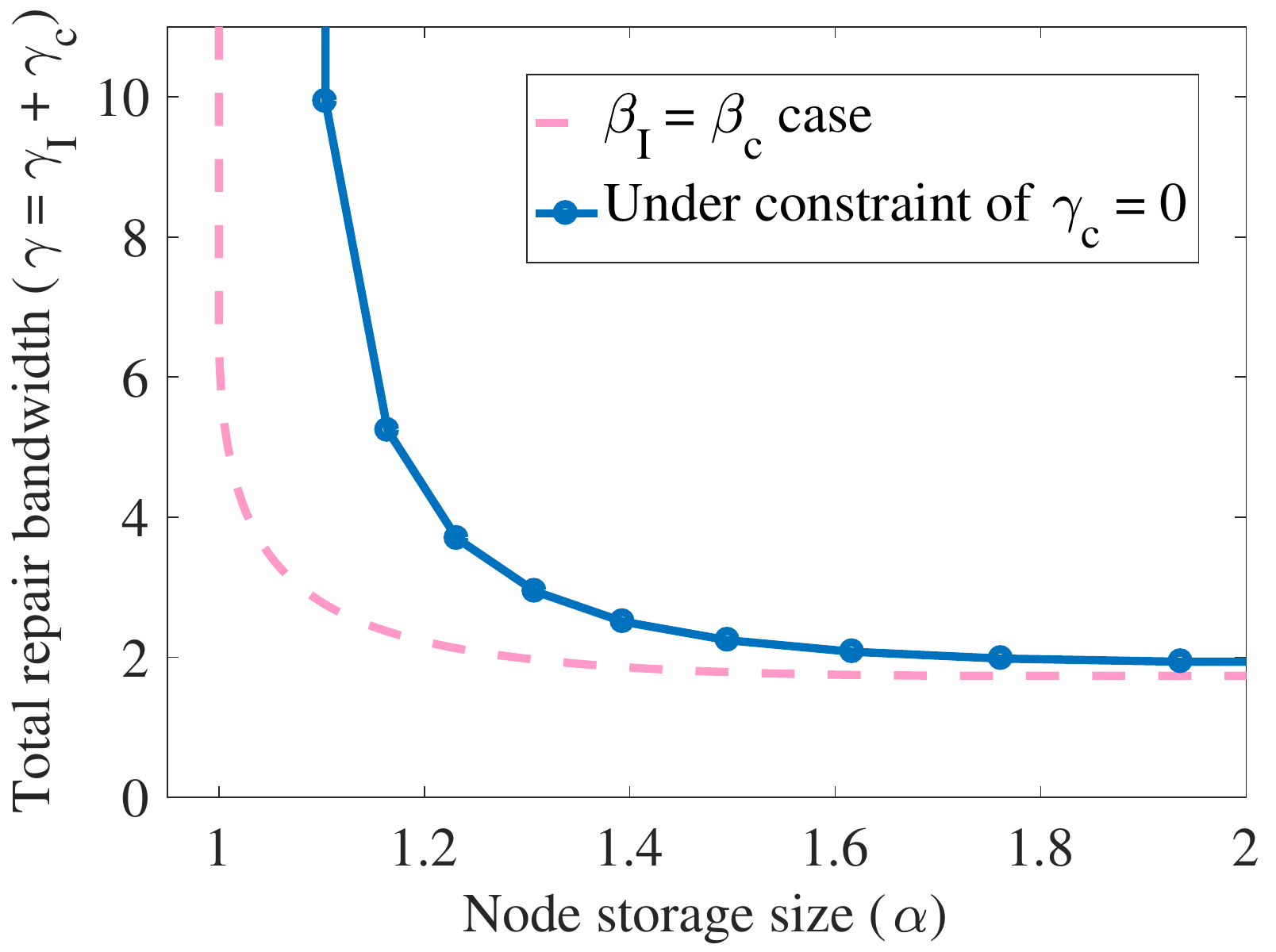}
	\caption{Optimal tradeoff between node storage size $\alpha$ and total repair bandwidth $\gamma$  ($n=100, k=85, L=10, \mathcal{M} = 85$)}
	\label{Fig:tradeoff_zero_gamma_c}
\end{figure}

An example for the tradeoff result of Corollary \ref{Corollary:gamma_c=0 case} is illustrated in Fig. \ref{Fig:tradeoff_zero_gamma_c}.
The plot for the $\beta_I = \beta_c$ case is illustrated as the dotted line. The solid line
represents the $\gamma_c = 0$ case, where in this case $\gamma = \gamma_I$. This shows that the cross-cluster repair bandwidth can be reduced to zero with extra resources ($\alpha$ and $\gamma_I$), where the set of feasible pairs of these resources is specified in Corollary \ref{Corollary:gamma_c=0 case}.
Note that $\gamma_c = 0$ means that the \textit{local repair in each cluster is possible}.

\subsection{Trade-off between $\gamma_I$ and $\gamma_c$}

In this subsection, we focus on the following problem: Given $\alpha$, what is the feasible set of $(\gamma_I, \gamma_c)$ pairs which satisfy $\mathcal{C}(\alpha, \gamma_I, \gamma_c) = \mathcal{M}$?
The sets of feasible $(\gamma_I, \gamma_c)$ pairs for various $\alpha$ values is illustrated in Fig. \ref{Fig:gamma_c_gamma_I}.
Note that the plots in the figure shows a decreasing function of $\alpha$ for a given $\gamma_I$. This has been obtained from  the capacity expression of  (\ref{Eqn:Capacity of clustered DSS}).

For small $\alpha$ values ($\mathcal{M}/k$ or $1.01 \times \mathcal{M}/k$ in the figure), the cross-cluster repair bandwidth cannot be reduced irrespective of the intra-cluster repair bandwidth $\gamma_I$. 
For sufficiently large $\alpha$ values, trade-off between $\gamma_c$ and $\gamma_I$ can be observed, with $\gamma_c$ settling to zero 
for large $\gamma_I$s as $\alpha$ increases.

\begin{figure}[!t]
	\centering
	\includegraphics[height=60mm]{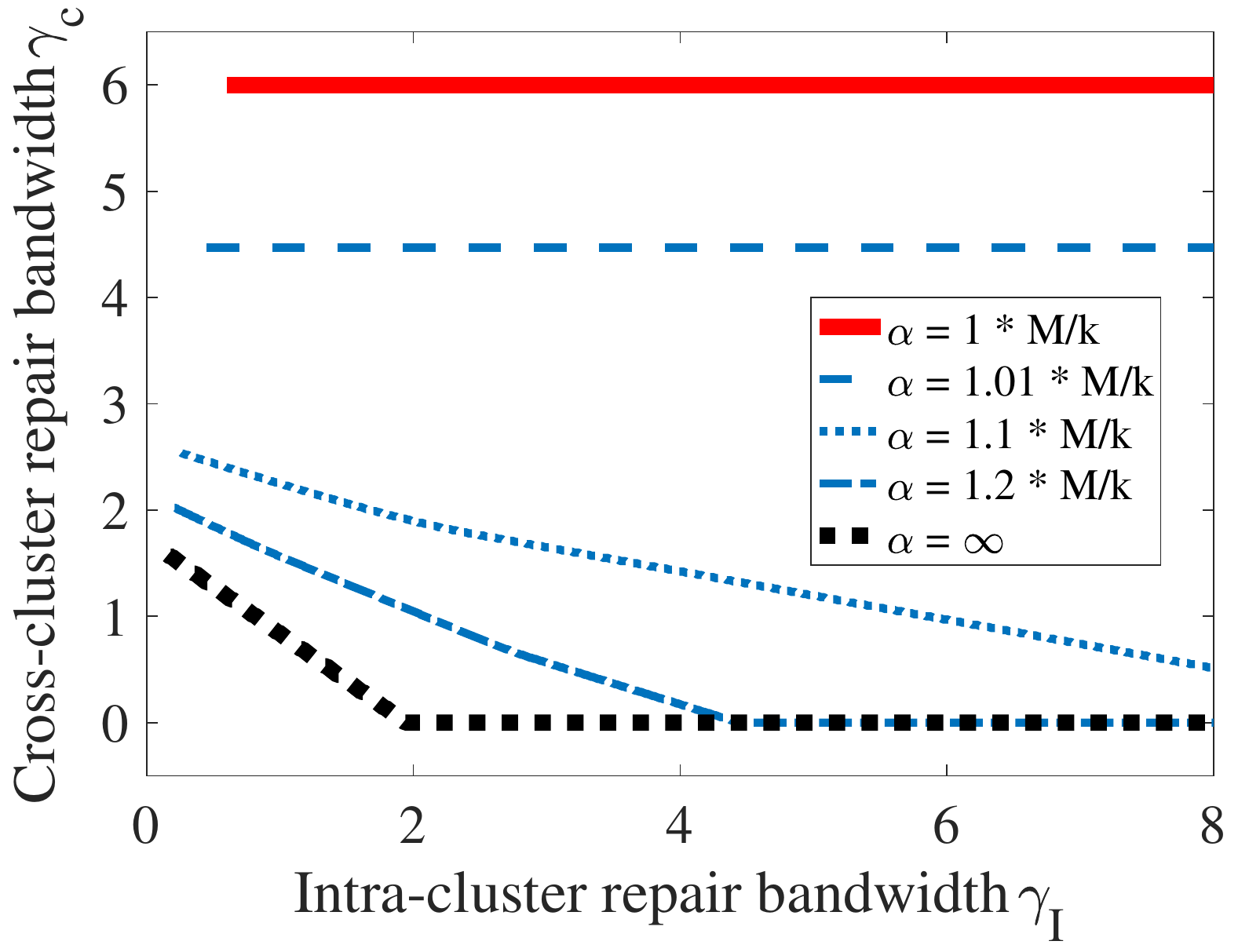}
	\caption{Optimal tradeoff between intra-/cross-cluster repair bandwidth ($n=100, k=85, L=10, \mathcal{M} = 85$)}
	\label{Fig:gamma_c_gamma_I}
\end{figure}

\section{Conclusion}

This paper considered a practical distributed storage system where storage nodes are dispersed  into several clusters. Noticing that the traffic burdens of intra- and cross-cluster communication are different, a new system model for a clustered distributed storage system is suggested. Based on the min-cut analysis of information flow graph, the storage capacity of the suggested model is obtained in a closed-form, as a function of three main resources: node storage size $\alpha$, intra-cluster repair bandwidth $\gamma_I$ and cross-cluster repair bandwidth $\gamma_c$. It is shown that the asymmetric repair ($\beta_I > \beta_c$) degrades the capacity, which is the cost to pay for reducing cross-cluster repair burden. 
Moreover, local repair ($\gamma_c = 0$) is shown to be possible at the expense of extra resources ($\alpha$ and $\gamma_I$), where the amounts of required extra resources are specified in a mathematical form.
Finally, the feasible ($\gamma_I, \gamma_c$) pairs showed a clear trade-off relation for large enough storage size $\alpha$. 


%
%

\bibliographystyle{IEEEtran}
\bibliography{IEEEabrv,ICC2017}

\begin{thebibliography}{10}
\providecommand{\url}[1]{#1}
\csname url@samestyle\endcsname
\providecommand{\newblock}{\relax}
\providecommand{\bibinfo}[2]{#2}
\providecommand{\BIBentrySTDinterwordspacing}{\spaceskip=0pt\relax}
\providecommand{\BIBentryALTinterwordstretchfactor}{4}
\providecommand{\BIBentryALTinterwordspacing}{\spaceskip=\fontdimen2\font plus
\BIBentryALTinterwordstretchfactor\fontdimen3\font minus
  \fontdimen4\font\relax}
\providecommand{\BIBforeignlanguage}[2]{{%
\expandafter\ifx\csname l@#1\endcsname\relax
\typeout{** WARNING: IEEEtran.bst: No hyphenation pattern has been}%
\typeout{** loaded for the language `#1'. Using the pattern for}%
\typeout{** the default language instead.}%
\else
\language=\csname l@#1\endcsname
\fi
#2}}
\providecommand{\BIBdecl}{\relax}
\BIBdecl

\bibitem{ghemawat2003google}
S.~Ghemawat, H.~Gobioff, and S.-T. Leung, ``The google file system,'' in
  \emph{ACM SIGOPS operating systems review}, vol.~37, no.~5.\hskip 1em plus
  0.5em minus 0.4em\relax ACM, 2003, pp. 29--43.

\bibitem{bhagwan2004total}
R.~Bhagwan, K.~Tati, Y.~Cheng, S.~Savage, and G.~M. Voelker, ``Total recall:
  System support for automated availability management.'' in \emph{NSDI},
  vol.~4, 2004, pp. 25--25.

\bibitem{dabek2004designing}
F.~Dabek, J.~Li, E.~Sit, J.~Robertson, M.~F. Kaashoek, and R.~Morris,
  ``Designing a dht for low latency and high throughput.'' in \emph{NSDI},
  vol.~4, 2004, pp. 85--98.

\bibitem{rhea2003pond}
S.~C. Rhea, P.~R. Eaton, D.~Geels, H.~Weatherspoon, B.~Y. Zhao, and
  J.~Kubiatowicz, ``Pond: The oceanstore prototype.'' in \emph{FAST}, vol.~3,
  2003, pp. 1--14.

\bibitem{dimakis2010network}
A.~G. Dimakis, P.~B. Godfrey, Y.~Wu, M.~J. Wainwright, and K.~Ramchandran,
  ``Network coding for distributed storage systems,'' \emph{IEEE Transactions
  on Information Theory}, vol.~56, no.~9, pp. 4539--4551, 2010.

\bibitem{ahlswede2000network}
R.~Ahlswede, N.~Cai, S.-Y. Li, and R.~W. Yeung, ``Network information flow,''
  \emph{IEEE Transactions on information theory}, vol.~46, no.~4, pp.
  1204--1216, 2000.

\bibitem{ford2010availability}
D.~Ford, F.~Labelle, F.~I. Popovici, M.~Stokely, V.-A. Truong, L.~Barroso,
  C.~Grimes, and S.~Quinlan, ``Availability in globally distributed storage
  systems.'' in \emph{OSDI}, 2010, pp. 61--74.

\bibitem{huang2012erasure}
C.~Huang, H.~Simitci, Y.~Xu, A.~Ogus, B.~Calder, P.~Gopalan, J.~Li, and
  S.~Yekhanin, ``Erasure coding in windows azure storage,'' in \emph{Presented
  as part of the 2012 USENIX Annual Technical Conference (USENIX ATC 12)},
  2012, pp. 15--26.

\bibitem{muralidhar2014f4}
S.~Muralidhar, W.~Lloyd, S.~Roy, C.~Hill, E.~Lin, W.~Liu, S.~Pan, S.~Shankar,
  V.~Sivakumar, L.~Tang \emph{et~al.}, ``f4: Facebook's warm blob storage
  system,'' in \emph{11th USENIX Symposium on Operating Systems Design and
  Implementation (OSDI 14)}, 2014, pp. 383--398.

\bibitem{ahmad2014shufflewatcher}
F.~Ahmad, S.~T. Chakradhar, A.~Raghunathan, and T.~Vijaykumar,
  ``Shufflewatcher: Shuffle-aware scheduling in multi-tenant mapreduce
  clusters,'' in \emph{2014 USENIX Annual Technical Conference (USENIX ATC
  14)}, 2014, pp. 1--13.

\bibitem{ernvall2013capacity}
T.~Ernvall, S.~El~Rouayheb, C.~Hollanti, and H.~V. Poor, ``Capacity and
  security of heterogeneous distributed storage systems,'' \emph{IEEE Journal
  on Selected Areas in Communications}, vol.~31, no.~12, pp. 2701--2709, 2013.

\bibitem{akhlaghi2010fundamental}
S.~Akhlaghi, A.~Kiani, and M.~R. Ghanavati, ``A fundamental trade-off between
  the download cost and repair bandwidth in distributed storage systems,'' in
  \emph{2010 IEEE International Symposium on Network Coding (NetCod)}.\hskip
  1em plus 0.5em minus 0.4em\relax IEEE, 2010, pp. 1--6.

\bibitem{gaston2013realistic}
B.~Gast{\'o}n, J.~Pujol, and M.~Villanueva, ``A realistic distributed storage
  system that minimizes data storage and repair bandwidth,'' \emph{arXiv
  preprint arXiv:1301.1549}, 2013.

\bibitem{tebbi2014code}
M.~A. Tebbi, T.~H. Chan, and C.~W. Sung, ``A code design framework for
  multi-rack distributed storage,'' in \emph{Information Theory Workshop (ITW),
  2014 IEEE}.\hskip 1em plus 0.5em minus 0.4em\relax IEEE, 2014, pp. 55--59.

\bibitem{7541298}
Y.~Hu, P.~P.~C. Lee, and X.~Zhang, ``Double regenerating codes for hierarchical
  data centers,'' in \emph{2016 IEEE International Symposium on Information
  Theory (ISIT)}, July 2016, pp. 245--249.

\bibitem{dimakis2011survey}
A.~G. Dimakis, K.~Ramchandran, Y.~Wu, and C.~Suh, ``A survey on network codes
  for distributed storage,'' \emph{Proceedings of the IEEE}, vol.~99, no.~3,
  pp. 476--489, 2011.

\bibitem{bang2008digraphs}
J.~Bang-Jensen and G.~Z. Gutin, \emph{Digraphs: theory, algorithms and
  applications}.\hskip 1em plus 0.5em minus 0.4em\relax Springer Science \&
  Business Media, 2008.

\end{thebibliography}

\end{document}